\newcommand{\Gurus}{\mathtt{Gu}} %ensemble de gurus d'une fonction de delegation
\newcommand{\GuruOf}{\mathtt{gu}}
\newcommand{\AttBy}{\mathtt{Att}}
\newcommand{\SN}{\mathcal{G}_{\mathtt{SN}}}
\newcommand{\VoterSet}{\mathcal{N}}
\newcommand{\EdgeSet}{\mathcal{E}}
\newcommand{\AbstSet}{\mathcal{A}}
\newcommand{\NeighbSet}{\mathtt{Nb}}
\newcommand{\Acc}{\mathtt{Acc}} %ensemble de gurus acceptables d'un votant
\newcommand{\Score}{\mathtt{rk}}
\newcommand{\VotingPower}{\mathtt{vp}}
\newcommand{\GP}{G_P} %delegation-acceptability graph
\newcommand{\AP}{A_P}
\newcommand{\Tree}{\mathcal{T}}
\newcommand{\parent}{p}
\newcommand{\Child}{Ch}
\newcommand{\Subtree}[1]{\Tree_{#1}}
\newcommand{\Tr}{\Subtree{r}}
\newcommand{\Tu}{\Subtree{u}}
\newcommand{\Trsansr}{\Tr \setminus \{ r \}}
\newcommand{\OrAbst}{\cup \{0\}}
\newcommand{\Label}{\operatorname{Label}}
\newtheorem{obs}{Observation}
\newtheorem{example}{Example}
\newtheorem{proposition}{Proposition}
\newtheorem{theorem}{Theorem}
\newtheorem{corollary}{Corollary}
\title{The Convergence of Iterative Delegations in Liquid Democracy in a Social Network}
\author{Bruno Escoffier$^{1,2}$, Hugo Gilbert$^3$, Ad\`ele Pass-Lanneau$^{1,4}$}
\date{}
\begin{document}
\maketitle

\begin{center}
$^1$ Sorbonne Universit\'e, CNRS, LIP6 UMR 7606, 

4 place Jussieu, 75005 Paris, France.

$^{2}$ Institut Universitaire de France,

$^3$ Gran Sasso Science Institute,

Viale Francesco Crispi 7, 67100, L'Aquila, Italy

$^4$ EDF R\&D, 7 boulevard Gaspard Monge, 91120 Palaiseau.

\end{center}

% note that the abstract must come before \maketitle
\begin{abstract}
Liquid democracy is a collective decision making paradigm which lies between direct and representative democracy. One of its main features is that voters can delegate their votes in a transitive manner such that: A delegates to B and B delegates to C leads to A indirectly delegates to C. These delegations can be effectively empowered by implementing liquid democracy in a social network, so that voters can delegate their votes to any of their neighbors in the network. However, it is uncertain that such a delegation process will lead to a stable state where all voters are satisfied with the people representing them. We study the stability (w.r.t. voters preferences) of the delegation process in liquid democracy and model it as a game in which the players are the voters and the strategies are their possible delegations. We answer several questions on the equilibria of this process in any social network or in social networks that correspond to restricted types of graphs.

We show that a Nash-equilibrium may not exist, and that it is even NP-complete to decide whether one exists or not. This holds even if the social network is a complete graph or a bounded degree graph. We further show that this existence problem is W[1]-hard w.r.t. the treewidth of the social network. Besides these hardness results, we demonstrate that an equilibrium always exists whatever the preferences of the voters iff the social network is a tree. We design a dynamic programming procedure to determine some desirable equilibria (e.g., minimizing the dissatisfaction of the voters) in polynomial time for tree social networks.  Lastly, we study the convergence of delegation dynamics. Unfortunately, when an equilibrium exists, we show that a best response dynamics may not converge, even if the social network is a path or a complete graph.   
\end{abstract}

%\newpage
% Renew this after \maketitle if the default list of authors is too long for headers
%\renewcommand{\shortauthors}{W.\ Vickrey et.\ al.}

\section{Introduction}
\label{sec:intro}
 
\emph{Liquid Democracy} (LD) is a voting paradigm which offers a middle-ground between direct and representative democracy. One of its main features is the concept of \emph{transitive delegations}, i.e., each voter can delegate her vote to some other  voter, called representative or proxy, which can in turn delegate her vote and the ones that have been delegated to her to another voter. Consequently, a voter who decides to vote has a voting weight corresponding to the number of people she represents, i.e., herself and the voters who directly or indirectly delegated to her. This voter is called the \emph{guru} of the people she represents. 
This approach has been advocated recently by many political parties as the German's Pirate party or the Sweden's Demoex party and is implemented in several tools as the online platforms LiquidFeedback \cite{behrens2014principles} and liquid.us or GoogleVotes \cite{hardt2015google}. 
One main advantage of this framework is its flexibility, as it enables voters to vote directly for issues on which they feel both concerned and expert and to delegate for others. 
In this way, LD provides a middle-ground between direct democracy, which is strongly democratic but which is likely to yield high abstention rates or uninformed votes, and representative democracy which is more practical but less democratic \cite{green2015direct,wiersma2013transitive}. Importantly, LD can be conveniently used in a Social Network (SN), where natural delegates are connected individuals. These choices of delegates are also desirable as they ensure that delegations rely on a foundation of trust. For these reasons, several works studying LD in the context of an SN enforce the constraint that voters may only delegate directly to voters that are connected to them~\cite{boldi2009voting,kahng2018liquid,bloembergen2018rational}.

\paragraph{Aim of this paper.}  
In this work, we tackle the problem of the stability of the delegation process in the LD setting. Indeed, it is likely that the preferences of voters over possible gurus will be motivated by different criteria and possibly contrary opinions. Hence, the iterative process where each voter chooses her delegate may end up in an unstable situation, i.e., a situation in which some voters would change their delegations. A striking example to illustrate this point is to consider an election where the voters could be positioned on the real line in a way that represents their right-wing left-wing political identity. If voters are ideologically close enough, each voter, starting from the left-side, could agree to delegate to her closest neighbor on her right. By transitivity, this would lead to all voters, including the extreme-left voters, having an extreme-right voter for guru. These unstable situations raise the questions: ``Under what conditions do the iterative delegations of the voters always reach an equilibrium? Does such an equilibrium even exist? Can we determine equilibria that are more desirable than others?''. 

We assume that voters are part of an SN represented by an undirected graph so that they can only delegate directly to one of their neighbors. The preferences of voters over possible gurus are given by unrestricted linear preference orders. In this setting, the delegation process yields a game where players are the voters involved in the election and each player seeks to minimize the rank of her guru in her preference order. We answer the questions raised above with a special emphasize on the SN. %More precisely, we obtain complexity results, both positive (efficient algorithms) and negative (intractability) by investigating different parameters of the SN (e.g., the treewidth) and particular types of graphs (e.g., complete graphs, trees).}\MessageFromBruno{Oui.}

\paragraph{Our results.} We first show that in a complete SN, the problem of determining if an equilibrium exists is equivalent to the NP-complete problem of determining if a digraph admits a kernel.  
Then we strengthen this result by giving hardness results on two parameters of the SN, the maximum degree of a node and the treewidth. Secondly, we obtain positive results. Indeed, we show that an equilibrium is guaranteed to exist iff the SN is a tree. We study the case of trees for which we design efficient algorithms for the computation of several desirable equilibria. 
Lastly, we study the convergence of delegation dynamics.
Unfortunately, when an equilibrium exists, we show that a best response dynamics may not converge, even if the SN is a path or is complete.

\section{Related Work}\label{sec:RelatedWork}
The stability of the delegation process is one of the several algorithmic issues raised by LD. These issues have recently raised attention in the AI literature. We review some of these questions here.\\

\textit{Are votes in an LD setting ``more correct''?} The idea underlying LD is that its flexibility should allow each voter to make an informed vote either by voting directly, or by finding a suitable guru. Several works have investigated this claim as the ones of Green-Armytage and Kahng et al. \cite{kahng2018liquid,green2015direct}.
In the former work \cite{green2015direct}, the author proposes a setting of spacial voting in which voters' opinions on different issues of a vote are given by their position on the real line (one position per issue). Moreover, the competence level of each voter is expressed as a random variable which adds noise to the estimates that she has about voters' positions (hers included). Votes and delegations are then prescribed by these noisy estimates and by the competence levels of the voters. Green-Armytage defines the expressive loss of a voter as the squared distance between her vote and her position and proves that transitive delegations decrease on average this loss measure. In the latter work~\cite{kahng2018liquid}, the authors study an election on a binary issue, for which there is one ``correct'' answer. Each voter has a competence level, i.e., a probability of being correct. They further assume that the voters belong to an SN and that each voter only accepts to delegate to her neighbors who are ``sufficiently'' more expert than her. The authors investigate delegation \emph{procedures} which take as input the SN and the voters' competence levels and output the probabilities with which each voter should vote or delegate to her approved neighbors. Their key result consists in showing that no ``local'' procedure (i.e., procedures s.t. the probability distribution for each voter only depends on her neighborhood) can guarantee that LD is, at the same time, never (in large enough graphs) less accurate and sometimes strictly more accurate than direct voting.\\

\textit{How much delegations should a guru get?} This question is twofold. On the one hand, gurus should have incentives to obtain delegations, but on the other hand it would be undesirable if a guru was to become too powerful.  This problem has led to two recent papers \cite{golzfluid,kotsialou2018incentivising}. In  the work of Kotsialou and Riley \cite{kotsialou2018incentivising}, voters have both preferences over candidates and over possible gurus.  Given the preferences over gurus, a \emph{delegation rule function} decides who should get the delegations, and then, a \emph{voting rule function} decides who wins the election given the preferences and the voting power of each guru. Focusing on two delegation rules named \emph{depth first delegation rule} and \emph{breadth first delegation rule}, they show that the latter one guarantees that a guru is always better  of w.r.t. the outcome of the election when receiving a delegation whereas it is not the case for the former one. This shows that incentivising participation (voting and delegating) can be a concern in LD. In the setting of G\"olz et al. \cite{golzfluid}, each voter can decide to vote or to specify multiple delegation options. Then a centralized algorithm should select delegations to minimize the maximum voting power of a voter. The authors give a $(1+\log(n))$-approximation algorithm ($n$ being the number of voters) and show that approximating the problem within a factor $\frac{1}{2}\log_2(n)$ is NP-hard. Lastly, they gave evidence that allowing voters to specify multiple possible delegation options (instead of one) leads to a dramatic decrease of the maximum voting power of a voter.\\

\textit{Are votes in an LD setting rational?} 
In the work of Christoff and Grossi \cite{christoffbinary}, the authors study the potential loss of a rationality constraint when voters should vote on different issues that are logically linked and for which they delegate to different proxies. In the continuation of this work, Brill and Talmon~\cite{Brill2018liquid} considered an LD framework in which each voter should provide a linear order over possible candidates. To do so each voter may delegate different binary preference queries to different proxies. This setting allows more flexibility but at the price of obtaining incomplete (because of delegation cycles) and even intransitive preference orders (hence the violation of a rationality constraint as studied by \cite{christoffbinary}). Notably, the authors showed that it is NP-hard to decide if an incomplete ballot obtained in this setting can be completed to obtain complete and transitive preferences while respecting the constraints induced by the delegations.\\

\textit{Are delegations in an LD setting rational?} Lastly, the work of Bloembergen et al. \cite{bloembergen2018rational} gives a different perspective on the study of voters' rationality in LD. The authors consider an LD setting where voters are connected in an SN and can only delegate to their neighbors in the network. The election is on a binary issue for which some voters should vote for the 0 answer and the others should vote for the 1 answer (voters of type $\tau_0$ or $\tau_1$). Each voter $i$ does not know exactly her type which is modeled by an accuracy $q_i\in[0.5,1]$ representing the probability with which voter $i$ makes the correct choice. Similarly, each pair $(i,j)$ of voters do not know if they are of the same type which is also modeled by a probability $p_{ij}$. Hence, a voter $i$ which has $j$ as guru has a probability that $j$ makes the correct vote (according to $i$) which is a formula including the $p_{ij}$ and $q_j$ values. The goal of each voter is to maximize the accuracy of her vote/delegation. Importantly, each voter that votes incurs a loss of satisfaction representing the work required to vote directly. This modeling leads to a class of games, called \emph{delegation games}. The authors proved the existence of pure Nash equilibria in several types of delegation games and gave upper and lower bounds on the price of anarchy, and the gain (i.e., the difference between the accuracy of the group after the delegation process and the one induced by direct voting) of such games.\\

Our approach is closest to this last work as we consider the same type of delegation games. However, our model of preferences over guru is more general as  we assume that each voter has a preference order over her possible gurus. These preference orders may be dictated by competence levels and types of voters as in \cite{bloembergen2018rational}. However, we do not make such hypothesis as the criteria to choose a delegate are numerous: geographic locality, cultural, political or religious identity, et caetera. Considering this more general framework strongly modifies the resulting delegation games.

\section{\label{sec:notations} Notations and  Settings} 
\subsection{Notations and Nash-stable delegation functions}
We denote by $\VoterSet=\{1,\ldots, n\}$ a set of voters that take part in a vote\footnote{Note that similarly to \cite{golzfluid}, we develop a setting where candidates are not mentioned. Proceeding in this way enables a general approach encapsulating different ways of specifying how candidates structure the preferences of voters over gurus.}. These voters are connected in an SN which is represented by an undirected graph $\SN = (\VoterSet,\EdgeSet)$, i.e., the vertices of the SN are the voters and $(i,j)\in \EdgeSet$ if voters $i$ and $j$ are connected in the SN. Let us denote by $\NeighbSet(i)$ the set of neighbors of voter $i$ in $\SN$. Each voter $i$ can declare intention to either vote herself, delegate to one of her neighbors $j\in \NeighbSet(i)$, or abstain. A \emph{delegation function} is a function $d:\VoterSet \rightarrow \VoterSet \cup \{0\}$ such that $d(i) = i$ if voter $i$ wants to vote, $d(i) = j \in \NeighbSet(i)$ if $i$ wants to delegate to $j$, and $d(i) = 0$ if $i$ wants to abstain.

Given a delegation function $d$, let $\Gurus(d)$ denote the \emph{set of gurus}, i.e., $\Gurus(d)=\{j \in \VoterSet \ |\ d(j) = j\}$. 
The \emph{guru of a voter $i \in \VoterSet$}, denoted by $\GuruOf(i,d)$, can be found by following the successive delegations starting from $i$. Formally, $\GuruOf(i,d)=j$ if there exists a sequence of voters $i_1,\ldots, i_{\ell}$ such that $d(i_k) = i_{k+1}$ for every $k \in \{1,\dots,\ell-1\}$, $i_1= i$, $i_{\ell} = j$ and $j \in \Gurus(d) \cup \{ 0 \}$.
However, it may happen that no such $j$ exists because the successive delegations starting from $i$ end up in a circuit, i.e., $i=i_1$ delegates to $i_2$, who delegates to $i_3$, and so on up to $i_{\ell}$ who delegates to $i_k$ with $k\in\{1,\ldots,\ell-1\}$. In this case, we consider that the $\ell$ voters abstain, as none of them takes the responsibility to vote, i.e., we set $\GuruOf(i_k,d) = 0$ for all $k\in\{1,\ldots,\ell\}$.
Such a definition of gurus allows to model the transitivity of delegations: if $d(i)$$=$$j$, $d(j)$$=$$g$, and $d(g)$$=$$g$, then the guru of $i$ will be $g$, even if $i$ has not delegated directly to $g$. 
Hence a voter $i$ can delegate directly to one of its neighbors in $\NeighbSet(i)$, but she can also delegate indirectly to another voter through a chain of delegations. 
Note that because voters can only delegate directly to their neighbors, such a chain of delegations coincides with a path in $\SN$. 

Given a voter $i$ and a delegation function $d$, we now consider how voter $i$ may change  %unilaterally
her delegation to get a guru different from her current guru $\GuruOf(i,d)$. She may decide to vote herself, to abstain, or to delegate to a neighbor $j$ with a different guru, and in the latter case she would get $\GuruOf(j,d)$ as a guru. 
%\MessageFromAdele{ce n'est pas tout a fait vrai si $d(j)=i$ (on obtient un cycle donc abstention de $i$ et $j$) mais la fin du paragraphe est correcte qd meme}\MessageFromBruno{Mettre une precision entre ()? Ou c'est lourd? (if $i$ is not in the chain of delegations starting from $j$)}
We denote by $\AttBy(i,d) = \cup_{j \in \NeighbSet(i)} \GuruOf(j,d)$ the set of gurus of the neighbors of $i$ in $\SN$. Then the gurus that $i$ can get by deviating from $d(i)$ %her delegation in $d$ 
is exactly the set $\AttBy(i,d) \cup \{0,i\}$. 

\begin{example} \label{ex:reachableGurus}
Consider the SN represented in Figure \ref{SNEx1} with the delegation function $d$ defined by $d(1) = 4$, $d(2) = 5$, $d(3) = 4$, $d(4) = 4$, $d(5) = 5$, $d(6) = 6$, $d(7) = 8$, $d(8) = 9$ and $d(9) = 6$. For this example, $\Gurus(d) = \{4,5,6\}$, with  $\GuruOf(1,d) = \GuruOf(3,d) = \GuruOf(4,d) = 4$, $\GuruOf(2,d) = \GuruOf(5,d) = 5$ and $\GuruOf(6,d) = \GuruOf(7,d)= \GuruOf(8,d) = \GuruOf(9,d) = 6$. If she wants, voter 1 can change her delegation. If she delegates to voter 3, then this will not change her guru as voter 3 is delegating to guru 4. However, if she delegates to voter 2, then her new guru will be voter 5. In any case, she can also decide to modify her delegation to declare intention to vote or abstain. Note that in this example voter 1 cannot change unilaterally her delegation in order to have voter 6 as guru. Indeed, voter 6 does not belong to $\AttBy(1,d) = \{4,5\}$, as there is no delegation path from a neighbor of voter 1 to voter 6.

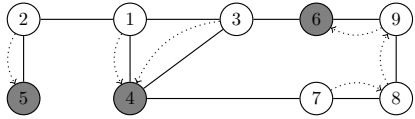
\begin{figure}[!ht] 
   \centering
    \scalebox{0.7}{\begin{tikzpicture}[-,auto,node distance=3cm,semithick]

  \node[circle,draw,text=black] (A)   at (0,0)                 {$1$};
  \node[circle,draw,text=black] (B)   at (-2,0)                 {$2$};
  \node[circle,draw,text=black] (C)   at (2,0)                 {$3$};
  \node[circle, draw,text=black, fill=gray] (D)   at (0,-1.5)                 {$4$};
  \node[circle,draw,text=black, fill=gray] (E)   at (-2,-1.5)                 {$5$};
  \node[circle,draw,text=black, fill=gray] (F)   at (3.5,0)                 {$6$};
  \node[circle,draw,text=black] (G)   at (3.5, -1.5)                 {$7$};
  \node[circle,draw,text=black] (H)   at (5, 0)                 {$9$};
  \node[circle,draw,text=black] (I)   at (5, -1.5)                 {$8$};
  
  \path (A) edge  node {} (B)
        (A) edge  node {} (C)
        (A) edge  node {} (D)
        (A) edge[bend right, dotted, ->]  node {} (D)
        (B) edge  node {} (E)
        (B) edge[bend right, dotted, ->]  node {} (E)
  	    (C) edge  node {} (D)
  	    (C) edge[bend right, dotted, ->]  node {} (D)
  	    (C) edge  node {} (F)
  	    (D) edge  node {} (G)
  	    (G) edge  node {} (I)
  	    (G) edge[bend left, dotted, ->]  node {} (I)
  	    (I) edge  node {} (H)
  	    (I) edge[bend left, dotted, ->]  node {} (H)
  	    (H) edge  node {} (F)
  	    (H) edge[bend left, dotted, ->]  node {} (F);
        
	  \end{tikzpicture}}
    \caption{SN $\SN$ in Example \ref{ex:reachableGurus}. Delegations are represented by dotted arrows and gurus are colored in gray.}
    \label{SNEx1}
\end{figure}
\end{example}

We assume that each voter $i$ has a preference order $\succ_i$ over who could be her guru in $\VoterSet\cup\{0\}$. For every voter $i \in \VoterSet$, and for every $j,k \in \VoterSet \cup \{0\}$, we have that $j \succ_i k$ 
if $i$ prefers to have $j$ as guru (or to vote herself if $j=i$, or to abstain if $j=0$) rather than to have $k$ as guru (or to vote herself if $k=i$, or to abstain if $k=0$).
%if $i$ prefers to delegate to $j$ (or to vote if $j=i$, or to abstain if $j=0$) rather than to delegate to $k$ (or to vote if $k=i$, or to abstain if $k=0$).
The collection of preference orders 
$\{\succ_i\ |\ i \in \VoterSet\}$ in turn  defines a \emph{preference profile} $P$. To illustrate these notations, we consider in Example~\ref{ex1} a simple instance with 3 voters. Note that this instance will be reused as building block in several places in the paper under the name of \textit{3-cycle}.
\begin{example}[3-cycle]\label{ex1}
For illustration purposes, consider the following instance in which there are $n = 3$ voters connected in a complete SN and with the following preference profile $P$:
\begin{align*}
1&:2\succ_1 1 \succ_1 3 \succ_1 0\\ 
2&:3\succ_2 2 \succ_2 1 \succ_2 0\\ 
3&:1\succ_3 3 \succ_3 2 \succ_3 0 
\end{align*}
Put another way, each voter $i$ prefers to delegate to $(i \mod 3) + 1$ rather than to vote directly and each voter prefers to vote rather than to abstain.
\end{example}
As a consequence of successive delegations, a voter might end up in a situation in which she prefers to vote or to abstain or to delegate to another guru that she can reach than to maintain her current delegation.  Such a situation is regarded as {\em unstable} as this voter would modify unilaterally her delegation.  
This is for instance the case in the previous example if $d(1)=2$, $d(2)=3$ and $d(3)=3$: by successive delegations, the guru of $1$ is $3$, but $1$ would prefer to vote instead.   
More formally, a delegation function $d$ is \emph{Nash-stable for voter $i$} if 
$$\GuruOf(i,d) \succ_i g\quad \forall g \in (\AttBy(i,d) \cup \{0,i\}) \setminus \{\GuruOf(i,d)\}.$$
A delegation function $d$ is \emph{Nash-stable} if it is Nash-stable for every voter in $\VoterSet$. A Nash-stable delegation function is also called an \emph{equilibrium} in the sequel.

%\MessageFromAdele{Etes vous ok sur ce debut de paragraphe?}\MessageFromBruno{Oui. En mettant peut-etre en avant que Acc(i) est certainement bcp plus petit que N en pratique?} 
It may seem difficult in practice that voters give a complete linear order over the set $\VoterSet \OrAbst$. We now highlight that the computation of equilibria does not require the whole preference profile.
We say that voter $i$ is an \emph{abstainer} in $P$ if she prefers to abstain rather than to vote, i.e., if $0 \succ_i i$; she is a \emph{non-abstainer} otherwise. We will denote by $\AbstSet$ the set of abstainers. Note that an abstainer never votes directly in an equilibrium; similarly, a non-abstainer never abstains in an equilibrium. 
Given a preference profile $P$, we define $\Acc(i) = \{j \in \VoterSet | j \succ_i i \text{ and } j \succ_i 0\}$ the set of \emph{acceptable gurus} for $i$. Note that this set is likely to be quite small compared to $n$. A necessary condition for a delegation function to be Nash-stable is that $\GuruOf(i,d)\in \Acc(i)$, or $\GuruOf(i,d)=0$ and $i\in \AbstSet$, or $\GuruOf(i,d)=i$ and $i \notin \AbstSet$.
Hence when looking for equilibria, the preferences of voter $i$ below $0$ (if $i \in \AbstSet$) or $i$ (if $i \notin \AbstSet$) can be dropped. In the sequel, we may define a preference profile only by giving, for every voter $i$, if she is an abstainer or not, and her preference profile on $\Acc(i)$.

\subsection{Existence and optimization problems investigated}

Stable situations are obviously desirable.  Sadly, there are instances for which there is no equilibrium.

\begin{obs}
The instance described in Example \ref{ex1} (3-cycle) admits no  equilibrium.
\end{obs}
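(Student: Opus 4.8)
The plan is to show that for the 3-cycle instance, every possible delegation function fails to be Nash-stable for at least one voter, by a case analysis. Since there are only $n=3$ voters on a complete SN, the state space is small enough to handle exhaustively, but I would organize the argument around the cyclic symmetry of the instance to avoid enumerating all $4^3$ delegation functions. The key structural fact is that each voter $i$ strictly prefers delegating to $(i \bmod 3)+1$ over voting, and prefers voting over both her remaining delegation target and abstention; so $\Acc(i)=\{(i \bmod 3)+1\}$ and no voter is an abstainer.

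First I would record the consequences of the necessary condition stated just before the Observation: in any equilibrium $d$, for each voter $i$ we need $\GuruOf(i,d)\in\Acc(i)\cup\{i\}$ (abstention is ruled out since $\AbstSet=\emptyset$), i.e. $\GuruOf(i,d)\in\{i,(i\bmod 3)+1\}$. So the only candidate gurus for voter $1$ are $1$ and $2$, for voter $2$ are $2$ and $3$, and for voter $3$ are $3$ and $1$. Next I would argue that the set of gurus $\Gurus(d)$ cannot be empty: if no one votes, either everyone abstains (impossible, since then some voter with a neighbor who is not abstaining would deviate to vote, her most preferred acceptable guru being unreachable but voting beating abstention) or the delegations form a cycle among all three, which by the circuit rule makes everyone's guru $0$; but then any voter would strictly prefer to vote rather than abstain, breaking stability. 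Hence at least one voter votes.

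Then I would drive the contradiction using the cyclic structure. By symmetry, suppose voter $1$ votes, so $\GuruOf(1,d)=1$. For this to be stable for voter $1$, voter $2$ (her most preferred guru) must not be reachable, i.e. $\GuruOf(2,d)\neq 2$ and moreover $2\notin\AttBy(1,d)$. But the necessary condition forces $\GuruOf(2,d)\in\{2,3\}$. If $\GuruOf(2,d)=2$ then voter $1$ would deviate to delegate to $2$ (her top choice), contradicting stability; so $\GuruOf(2,d)=3$, which in turn requires $\GuruOf(3,d)=3$ (voter $3$ votes) by tracing the chain $2\to 3$. Now apply the same reasoning to the newly-voting voter $3$: her top acceptable guru is voter $1$, who is voting and is her neighbor, so $1\in\AttBy(3,d)$ and voter $3$ would strictly prefer to delegate to $1$ rather than vote — contradicting stability for voter $3$. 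This closes every case up to the $3$-fold cyclic symmetry, so no equilibrium exists.

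I expect the main obstacle to be handling the cases cleanly where delegations form partial chains or cycles, so that the guru assignments stay consistent with the circuit abstention rule; the bookkeeping of which voter is reachable from which neighbor (the $\AttBy$ sets) is where an error could creep in. Exploiting the cyclic symmetry to reduce to the single case ``voter $1$ votes'' is what keeps this from becoming a tedious enumeration, and verifying that the ``everyone delegates'' configurations collapse to the all-abstain outcome is the one spot I would write out carefully.
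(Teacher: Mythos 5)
Your proof is correct, but it takes a genuinely different route from the paper's. The paper argues via a two-sided bound on the cardinality of the guru set: since for every pair of voters one of them accepts the other, $|\Gurus(d)|\geq 2$ is impossible (one guru would prefer to delegate to the other), and since no voter is accepted by both others, $|\Gurus(d)|\leq 1$ is impossible (some non-guru would have no acceptable guru and would rather vote). This is exactly the statement that the directed $3$-cycle has no kernel, and it prefigures the equivalence of Proposition~\ref{prop:equivGurusKernel}. You instead establish only that $\Gurus(d)\neq\emptyset$, use the cyclic symmetry of the instance to assume $d(1)=1$, and then force the chain of deductions $\GuruOf(2,d)\neq 2$, hence $\GuruOf(2,d)=3$ by the necessary condition, hence $d(3)=3$, at which point voter $3$ deviates to voter $1$. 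Both arguments rest on the same structural facts; the paper's counting needs no symmetry and generalizes directly to the kernel characterization, while yours is more elementary and makes the forced delegations explicit, at the cost of more delegation-chain and $\AttBy$ bookkeeping. One imprecision to repair when writing yours out: the dichotomy in your guru-free case (everyone abstains, or a delegation cycle among all three) is not exhaustive — mixed configurations exist, e.g.\ a voter delegating to an abstainer — and your parenthetical about a non-abstaining neighbor is vacuous in the all-abstain subcase. This is harmless, though, because in every guru-free configuration all voters have guru $0$ and any of them strictly improves by voting, which is the argument you actually invoke.
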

\begin{proof}
Assume by contradiction that there exists a Nash-stable delegation function $d$. First note that as the SN is complete, any voter can delegate to any other voter. Second, note that for any pair of voters in $\VoterSet$, there is always one voter that approves the other as possible guru, i.e., she prefers to delegate to this voter rather than to vote or abstain. Hence, $|\Gurus(d)|$ cannot be greater than 1, otherwise one of the guru would rather delegate to another guru than vote directly. On the other hand, note that there is no voter that is approved as possible guru by all other voters. Hence, $|\Gurus(d)|$ cannot be less than 2, otherwise one of the voter in $\VoterSet\backslash \Gurus(d)$ would rather vote than delegate to one of the gurus (as in this example $\AbstSet=\emptyset$). We obtain the desired contradiction.
\end{proof}

Hence the first problem, called \textbf{EXISTENCE}, that we will investigate in this article is the one of the existence of an equilibrium.\\

\noindent\fbox{\parbox{0.98\columnwidth}{
\textbf{EXISTENCE (abbreviated by EX)}\\
\emph{INSTANCE:} A preference profile $P$ and a social network $\SN$.\\
\emph{QUESTION:}  Does there exist an equilibrium?}}\\

Note that problem \textbf{EX} is in NP, as given a delegation function, we can easily find the guru of each voter and then check the Nash-stability condition in polynomial time.\\

For instances for which we know that some equilibrium exists, we will investigate if we can compute equilibria verifying particular desirable properties. 
Firstly, given a voter $i\in \VoterSet\setminus\AbstSet$, we will try to know if there exists a Nash-stable delegation function $d$ for which $i$ is a guru, i.e., $i\in \Gurus(d)$. We call this problem \textbf{MEMBERSHIP}.\\

\noindent\fbox{\parbox{0.98\columnwidth}{
\textbf{MEMBERSHIP (abbreviated by MEMB)}\\
\emph{INSTANCE:} A preference profile $P$, a social network $\SN$ and a voter $i \in \VoterSet\setminus\AbstSet$.\\
\emph{QUESTION:}  Does there exist a Nash-stable delegation function $d$ for which $i\in \Gurus(d)$?
}}\\

Secondly, we will try to find a Nash-stable delegation function that optimizes some objective function either to a) minimize the dissatisfaction of the voters, b) minimize the maximum voting power of a guru, or c) minimize the number of voters who abstain. More formally, we will study the three following optimization problems:
\begin{itemize}
    \item \textbf{MIN DISSATISFACTION} (\textbf{MINDIS} for short) minimizes the sum of dissatisfactions of the voters where the dissatisfaction of a voter $i$ is given by  $(\Score(i,d)-1)$, $\Score(i,d)$ being the rank of $\GuruOf(i,d)$ in the preference profile of $i$;
    \item \textbf{MIN MAX VOTING POWER} (\textbf{MINMAXVP} for short) minimizes the maximal voting power of a guru, where the voting power of a guru $g$ is defined by $\VotingPower(g,d)$$=$$|\{j\!\in\! \VoterSet| \GuruOf(j,d)$$=$$g\}|$;
    \item \textbf{MIN ABSTENTION} (\textbf{MINABST} for short) minimizes the number of voters that abstain in an equilibrium, i.e, it minimizes $|\{i\in \VoterSet | \GuruOf(i,d) = 0\}|$.
\end{itemize}

\noindent\fbox{\parbox{0.98\columnwidth}{
Problems \textbf{MINDIS}, \textbf{MINMAXVP} and \textbf{MINABST} \\
\emph{INSTANCE:} A preference profile $P$ and a social network $\SN$.
\\
\emph{SOLUTION:} A Nash-stable delegation function $d$.\\
\emph{MEASURE for} \textbf{MINDIS}: $\sum_{i\in \VoterSet} (\Score(i,d)-1)$ (to minimize).\\
\emph{MEASURE for} \textbf{MINMAXVP}: $\max_{i\in \Gurus(d)} \VotingPower(i,d)$ (to minimize).\\
\emph{MEASURE for} \textbf{MINABST}: $|\{i\in \VoterSet | \GuruOf(i,d) = 0\}|$ (to minimize).
}}\\

The last questions that we investigate capture the dynamic nature of delegations. 

\subsection{Convergence problems investigated}

In situations where an equilibrium exists, a natural question is whether a dynamic delegation process (necessarily) converges towards such an equilibrium. As classically done in game theory (see for instance~\cite{NisanSVZ11}), we will consider dynamics where iteratively one voter has the possibility to change her delegation\footnote{This type of delegation dynamics could be for instance implemented by using the \emph{statement voting} framework defined by Zhang and Zhou \cite{zhang2017brief} in which each voter can prescribe his own delegation rule. Each voter would then describe a statement (conditional actions) which would ensure that she is represented by her favorite attainable guru.}. In the dynamics called {\it better response dynamics}, the voter will try to improve her outcome if possible. In the {\it best response dynamics} a voter $i$ chooses $d(i)$ so as to maximize her outcome. Let us define this more formally. 

Given a delegation function $d$ and a voter $i$, let:
%\MessageFromAdele{a formuler avec $\AttBy$ peut etre ? pour ne pas introduire de nouvelle notation}
\begin{itemize}
\item $I_d(i)$ be the set of improved ``moves'' for $i$: $I_d(i)=\{j\in \NeighbSet(i) \cup\{0\}: \GuruOf(i,d_{i\rightarrow j}) \succ_i \GuruOf(i,d)\}$ where $d_{i\rightarrow j}$ is the same delegation as $d$ up to the fact that $i$ delegates to $j$ (or votes if $j=i$, or abstains if $j=0$). Note that we do not consider moves where voter $i$ delegates to a voter that abstains or that creates a cycle. In this case, she would rather abstain herself, i.e., set $d(i)=0$.
\item $B_d(i)$ be the set of $j\in I_d(i)\cup \{d(i)\}$ maximizing $i$'s outcome. If there is no possible improvement ($I_d(i)= \emptyset$) then $B_d(i)=\{d(i)\}$ ($i$ will not change her delegation).  
\end{itemize}

In a dynamics, we are given a starting delegation function $d_0$ and a token function $T:\mathbb{N}^* \rightarrow \VoterSet$.
The most simple choice of starting delegation function $d_0$ is the one for which every voter declares intention to vote. Note however that the convergence of a dynamics depends on $d_0$, hence we will specify in our results the starting delegation functions under consideration.

The token function specifies that voter $T(t)$ has the token at step $t$: she has the right to change her delegation. This gives a sequence of delegation functions $(d_t)_{t\in \mathbb{N}}$ where for any $t\in \mathbb{N}^*$, if $j\neq T(t)$ then $d_t(j)=d_{t-1}(j)$. A dynamics is said to converge if there is a $t^*$ such that for all $t\geq t^*$ $d_{t^*}=d_t$. 
We will assume, as usual, that each voter has the token an infinite number of times. A classical way of choosing such a function $T$ is to consider a permutation $\sigma$ over the voters in $\VoterSet$, and to repeat this permutation over the time to give the token (if $t=r\mod n$ then $T(t)=\sigma(r)$). We will call these dynamics {\it permutation dynamics}.

Given $d_0$ and $T$, a dynamics is called:
\begin{itemize}
	\item A better response dynamics or Improved Response Dynamics (IRD) if for all $t$, $T(t)$ chooses an improved move if any, otherwise does not change her delegation: if $I_{d_{t-1}}(T(t))\neq \emptyset$ then $d_t(T(t))\in I_{d_{t-1}}(T(t))$, otherwise $d_t(T(t))=d_{t-1}(T(t))$.
    \item A Best Response Dynamics (BRD) if for all $t$, $T(t)$ chooses a move in $B_{d_{t-1}}(T(t))$. 
\end{itemize}
Note that a BRD is also an IRD.

The last problems that we investigate, denoted by \textbf{IR-CONVERGENCE} (\textbf{IR-CONV} for short) and \textbf{BR-CONVERGENCE} (\textbf{BR-CONV} for short), can be formalized as:\\

\noindent\fbox{\parbox{0.98\columnwidth}{
\textbf{IR-CONV (resp. BR-CONV)}\\
\emph{INSTANCE:} A preference profile $P$ and a social network $\SN$.\\
\emph{QUESTION:}  Does a dynamic delegation process under IRD (resp. BRD) necessarily converge whatever the token function $T$?
}}

\subsection{Summary of results and outline of the paper}
Our results are presented in Table~\ref{tab:synthesis}. 
In Section~\ref{sec:hardness}, we investigate the complexity of problem \textbf{EX}. We will show that when $\SN$ is complete, this problem is equivalent to the problem of determining if a digraph admits a kernel (i.e., an independent set of nodes $S$ such that for every other node $u$ not in $S$, there exists an arc $(u,v)$ with $v\in S$) which is NP-complete \cite{chvatal1973computational}. We then strengthen this result  by showing that \textbf{EX} is also NP-complete when the maximum degree of $\SN$ is bounded by 5 and is W[1]-hard w.r.t. the treewidth of $\SN$. These results are summarized in the left tabular of Table~\ref{tab:synthesis}. Hence, deciding if an instance admits an equilibrium is an NP-complete problem. However we will identify specific SNs that ensure that an equilibrium exists whatever the preference profile $P$. More precisely, we will see that an equilibrium exists whatever the preferences of the voters iff the SN is a tree. Hence in Section~\ref{sec:trees}, we investigate the class of tree SNs and we design a dynamic programming scheme which allows to solve problems \textbf{MEMB}, \textbf{MINDIS}, \textbf{MINMAXVP} and \textbf{MINABST} in polynomial time. The polynomial complexity results we obtain are given in the central tabular of Table~\ref{tab:synthesis}. Lastly, in Section \ref{sec:conv}, we study the convergence of delegations dynamics in LD. Unfortunately, when an equilibrium exists, we show that a BRD may not converge even if $\SN$ is complete or is a path. For a star SN, we obtain that a BRD will always converge, whereas an IRD may not. These results are summarized in the right tabular of Table~\ref{tab:synthesis}.

\begin{table}[!h]
\begin{center}
\scalebox{0.76}{
\begin{tabular}{|c|c|}
\hline
Type of $\SN$ or parameter & \textbf{EX} \\
\hline
Complete & NP-C\\
\hline
Maximum degree $= 5$ & NP-C\\
\hline
Treewidth & W[1]-hard\\
\hline
Tree & AE\\
\hline
\end{tabular}}
\scalebox{0.76}{
\begin{tabular}{|c|c|c|}
\hline
Problem$\backslash$ $\SN$ & Star & Tree \\ 
\hline
\textbf{MEMB} & $O(n^2)$ & $O(n^3)$\\  
\hline
\textbf{MINDIS} & $O(n^2)$ & $O(n^3)$\\
\hline
\textbf{MINMAXVP} & $O(n^2)$ & $O(n^4)$\\ 
\hline
\textbf{ABST} & $O(n^2)$ & $O(n^3)$ \\
\hline
\end{tabular}}
\scalebox{0.76}{
\begin{tabular}{|c|c|c|c|}
\hline
$\SN$ & Problem & \textbf{IR} & \textbf{BR}\\
\cline{2-2}
\multicolumn{2}{|c|}{with an equilibrium}  & \textbf{-CONV} & \textbf{-CONV}\\
\hline
\multicolumn{2}{|c|}{Star} & NA & Always \\
\hline
\multicolumn{2}{|c|}{Path} & NA & NA \\
\hline
\multicolumn{2}{|c|}{Complete} & NA & NA \\
\hline
\end{tabular}}
\end{center}
\caption{\label{tab:synthesis} Synthesis of results (AE is for Always Exists; NA for Not Always; NP-C for NP-Complete).}
\end{table}

\begin{comment}
\begin{table}[!h]
\begin{center}
\scalebox{0.76}{
\begin{tabular}{|c|c|}
\hline
Type of $\SN$ or parameter & \textbf{EX} \\
\hline
Complete & NP-C\\
\hline
Maximum degree $= 5$ & NP-C\\
%$+ \mathtt{deg}$ & and $\mathtt{deg} = 5$\\
\hline
Treewidth & W[1]-hard\\
\hline
Tree & AE\\
\hline
\end{tabular}}
\scalebox{0.76}{
\begin{tabular}{|c|c|c|c|c|c|c|}
\hline
Type of $\SN$ & \textbf{IR-} & \textbf{BR}  & \textbf{MEMB} & \textbf{MIN-} & \textbf{MIN-} & \textbf{MIN-}\\
 & \textbf{CONV} & \textbf{-CONV}  &  & \textbf{DIS} & \textbf{MAXVP} & \textbf{ABST}\\
\hline
Star & NA & Always & $O(n^2)$  & $O(n^2)$ & $O(n^2)$  & $O(n^2)$ \\
\hline
Path & NA & NA & $O(n^3)$ & $O(n^3)$ & $O(n^4)$ & $O(n^3)$\\
\hline
Tree & NA & NA & $O(n^3)$ & $O(n^3)$ & $O(n^4)$ & $O(n^3)$\\
\hline
\end{tabular}}
\end{center}
\caption{\label{tab:synthesis} Synthesis of results (AE is for Always Exists; NA for Not Always; NP-C for NP-Complete).}
\end{table}
\end{comment}

\section{Existence of equilibria: hardness results} \label{sec:hardness}

\subsection{Complete social networks}\label{subsec:complete}
We focus in this subsection on the case where the SN is a complete graph. We mainly show that determining whether an equilibrium exists or not is an NP-complete problem (Theorem~\ref{theo:equivKernelNashStable}), by showing an equivalence with the problem of finding a kernel in a graph. This equivalence is also helpful to find subcases where an equilibrium always exist.\\

We define the \emph{delegation-acceptability digraph} $\GP = (\VoterSet\setminus\AbstSet, \AP)$ by its arc-set $\AP = \{ (i,j) \ |\ j\in \Acc(i)\}$. Stated differently, there is one vertex per non-abstainer and there exists an arc from $i$ to $j$ if $i$ accepts $j$ as a guru. For example, in Figure \ref{delAcc}, we give a partial  preference profile $P$ involving 5 voters and the corresponding delegation-acceptability digraph $\GP$.

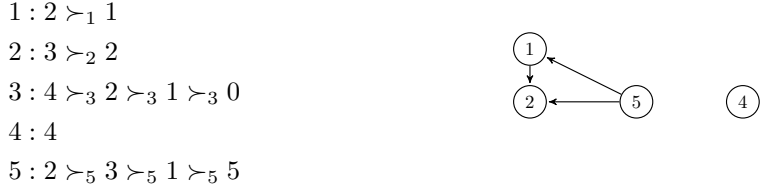
\begin{figure}[!ht] 
\begin{minipage}[c]{.46\linewidth}
\begin{align*}
1&:2\succ_1 1 \\ %\succ_1 3 \succ_1 0 \succ_1 4 \succ_1 5 \\ 
2&:3\succ_2 2 \\ %\succ_2 0 \succ_2 1 \succ_1 5 \succ_2 4 \\ 
3&:4\succ_3 2 \succ_3 1 \succ_3 0 \\% \succ_3 3 \succ_3 5\\ 
4&:4\\%\succ_4 0 \succ_4 3 \succ_4 2 \succ_4 5 \succ_4 1\\ 
5&:2\succ_5 3 \succ_5 1 \succ_5 5 %\succ_5 4 \succ_5 0
\end{align*}
\end{minipage}\hfill
\begin{minipage}[c]{.46\linewidth}
\scalebox{0.7}{\begin{tikzpicture}[->,>=stealth',shorten >=1pt,auto,node distance=3cm,semithick]
  
  \node[circle,draw,text=black] (A2)   at (0,1)                 {$1$};
  \node[circle,draw,text=black] (B2)   at (0,0)                 {$2$};
  \node[circle,draw,text=black] (D2)   at (4, 0)               {$4$};
  \node[circle,draw,text=black] (E2)   at (2, 0)                 {$5$};
  
  \path (A2) edge  node {} (B2)  
	    (E2) edge  node {} (B2)
	    (E2) edge  node {} (A2);
	  \end{tikzpicture}}
\end{minipage}
    \caption{A partial preference profile $P$ involving 5 voters (left-hand side of the figure) and the corresponding delegation-acceptability digraph  $\GP$ (right-hand side of the figure).}
    \label{delAcc}
\end{figure}
\begin{comment}
\begin{example}\label{ex1cont}
Let $n = 5$ and consider the following partial preference profile $P$:
\begin{align*}
1&:2\succ_1 1 \\ %\succ_1 3 \succ_1 0 \succ_1 4 \succ_1 5 \\ 
2&:3\succ_2 2 \\ %\succ_2 0 \succ_2 1 \succ_1 5 \succ_2 4 \\ 
3&:4\succ_3 2 \succ_3 1 \succ_3 0 \\% \succ_3 3 \succ_3 5\\ 
4&:4\\%\succ_4 0 \succ_4 3 \succ_4 2 \succ_4 5 \succ_4 1\\ 
5&:2\succ_5 3 \succ_5 1 \succ_5 5 %\succ_5 4 \succ_5 0
\end{align*}
The delegation-acceptability digraph $\GP$ is given in Figure \ref{delAcc}. 
\end{example}
\begin{figure}[!ht] 
    \scalebox{0.7}{\begin{tikzpicture}[->,>=stealth',shorten >=1pt,auto,node distance=3cm,semithick]
  
  \node[circle,draw,text=black] (A2)   at (0,1)                 {$1$};
  \node[circle,draw,text=black] (B2)   at (0,0)                 {$2$};
  \node[circle,draw,text=black] (D2)   at (4, 0)               {$4$};
  \node[circle,draw,text=black] (E2)   at (2, 0)                 {$5$};
  
  \path (A2) edge  node {} (B2)  
	    (E2) edge  node {} (B2)
	    (E2) edge  node {} (A2);
	  \end{tikzpicture}}
    \caption{The delegation-acceptability digraph  $\GP$ in Example \ref{ex1cont}.}
    \label{delAcc}
\end{figure}
\end{comment}

The main result of this subsection, stated in Proposition~\ref{prop:equivGurusKernel}, is a characterization of all sets of gurus of equilibria, as specific subsets of vertices of the delegation-acceptability digraph. Let us introduce additional graph-theoretic definitions.
Given a digraph $G = (\mathtt{V}, \mathtt{A})$, a subset of vertices $K \subset \mathtt{V}$ is \emph{independent} if there is no arc between two vertices of $K$. It is \emph{absorbing} if for every vertex $u \notin K$, there exists $k \in K$ s.t. $(u,k) \in \mathtt{A}$ (we say that $k$ \emph{absorbs} $u$). A \emph{kernel} of $G$ is an independent and absorbing subset of vertices. %Determining whether a graph admits a kernel or not is NP-complete~\cite{chvatal1973computational}. 

\begin{proposition}
\label{prop:equivGurusKernel}
Assume $\SN$ is complete, then given a preference profile $P$ and a subset of voters $K \subseteq \VoterSet$, the following propositions are equivalent:
\begin{itemize}
    \item[(i)] there exists an equilibrium $d$ s.t. $\Gurus(d) = K$;
    \item[(ii)] $K$ contains no abstainer and $K$ is a kernel of $\GP$.
\end{itemize}  
\end{proposition}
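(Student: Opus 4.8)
The plan is to exploit the drastic simplification that completeness of $\SN$ brings to the set of gurus a voter can reach by a single deviation, and then to read the two kernel conditions directly off the Nash-stability inequalities. First I would record the key preliminary fact: when $\SN$ is complete, for every delegation function $d$ and every voter $i$,
$$\AttBy(i,d) \cup \{0,i\} = (\Gurus(d) \setminus \{i\}) \cup \{0,i\}.$$
Indeed $\AttBy(i,d) \subseteq \Gurus(d)\cup\{0\}$ since the guru of any voter lies in $\Gurus(d)\cup\{0\}$; conversely every guru $g \neq i$ is a neighbour of $i$ with $\GuruOf(g,d)=g$, so $g \in \AttBy(i,d)$. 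Hence, once $\Gurus(d)=K$, the stability condition for $i$ becomes $\GuruOf(i,d) \succ_i g$ for all $g$ in $((K\setminus\{i\})\cup\{0,i\})\setminus\{\GuruOf(i,d)\}$; that is, $i$ may only threaten to vote, abstain, or switch to another current guru.

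For (i) $\Rightarrow$ (ii), I would take an equilibrium $d$ with $\Gurus(d)=K$ and verify the three requirements. No $k \in K$ is an abstainer, since $\GuruOf(k,d)=k$ and stability forces $k \succ_k 0$. Independence holds because an arc $(g,g')$ inside $K$ would mean $g' \in \Acc(g)$, i.e.\ $g' \succ_g g = \GuruOf(g,d)$ with $g' \in K\setminus\{g\}$ attainable, contradicting stability of $g$. For absorption, take a non-abstainer $u \notin K$: stability rules out $\GuruOf(u,d)=0$ (she prefers to vote, as $u \succ_u 0$), so $\GuruOf(u,d)=k$ for some $k \in K\setminus\{u\}$, and stability gives $k \succ_u u$ and $k \succ_u 0$, i.e.\ $k \in \Acc(u)$; thus $k$ absorbs $u$.

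For (ii) $\Rightarrow$ (i), I would construct $d$ explicitly: set $d(k)=k$ for $k \in K$; for $u \notin K$, if $K \cap \Acc(u) \neq \emptyset$ let $d(u)$ be the $\succ_u$-maximal element of $K \cap \Acc(u)$, and otherwise set $d(u)=0$. Then $\Gurus(d)=K$ and every non-guru reaches a guru directly, so no cycles arise. Stability of $k \in K$ reduces to $k \succ_k 0$ (true as $k \notin \AbstSet$) and $k \succ_k k'$ for $k' \in K\setminus\{k\}$: if $k' \succ_k k$ and $k' \succ_k 0$ then $(k,k') \in \AP$, contradicting independence, while if $0 \succ_k k'$ then transitivity with $k \succ_k 0$ gives $k \succ_k k'$. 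For a voter delegating to $k^{*}:=d(u) \in \Acc(u)$ I would use $k^{*}\succ_u u$ and $k^{*}\succ_u 0$, plus the crucial maximality step: any $k' \in K$ with $k' \succ_u k^{*}$ satisfies $k' \succ_u u$ and $k' \succ_u 0$ by transitivity, so $k' \in K \cap \Acc(u)$, contradicting the choice of $k^{*}$. Finally an abstaining $u$ (so $K \cap \Acc(u)=\emptyset$) is stable since $0 \succ_u u$, and no $k' \in K$ beats $0$: otherwise $k' \succ_u 0$ with $k' \notin \Acc(u)$ forces $u \succ_u k'$, whence $0 \succ_u u \succ_u k'$, a contradiction.

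The main obstacle is the construction direction, and within it the maximality argument. Delegating $u$ to an \emph{arbitrary} acceptable guru of $K$ is not sufficient, because $u$ could still profitably deviate to some other guru $k' \in K$ that happens to lie outside $\Acc(u)$; choosing the $\succ_u$-maximal acceptable guru is exactly what forecloses this, the point being that any strictly better guru is automatically acceptable. The separate handling of abstainers, which are not vertices of $\GP$ and so are untouched by the absorption hypothesis, is the other place where care is needed.
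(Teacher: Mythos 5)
Your overall route is the same as the paper's: both directions are argued the same way, and your construction for (ii) $\Rightarrow$ (i) (delegate to the $\succ_u$-maximal element of $K \cap \Acc(u)$, abstain otherwise) in fact coincides, once absorption is taken into account, with the paper's ``each non-guru delegates to her favourite element of $K \cup \{0\}$''. Your direction (i) $\Rightarrow$ (ii), the preliminary fact about attainable gurus in a complete $\SN$, and the maximality argument for delegating voters are all correct.

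The construction direction, however, has a genuine gap: nowhere in your verification do you use the absorption hypothesis, yet (ii) $\Rightarrow$ (i) is false without it (take $K$ independent with no abstainer, and a non-abstainer $u \notin K$ who accepts nobody in $K$: by your own direction (i) $\Rightarrow$ (ii), no equilibrium can have guru set $K$). The error sits exactly in your last case: for $u \notin K$ with $K \cap \Acc(u) = \emptyset$ you set $d(u)=0$ and declare $u$ stable ``since $0 \succ_u u$'', but nothing you established implies that such a $u$ is an abstainer; if instead $u \succ_u 0$, she deviates to voting and $d$ is not an equilibrium. The missing step is precisely where absorption must enter: if $u \notin K$ is a non-abstainer, then $u$ is a vertex of $\GP$ outside $K$, so absorption yields some $k \in K$ with $(u,k) \in \AP$, i.e.\ $K \cap \Acc(u) \neq \emptyset$. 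Hence the branch $d(u)=0$ can only occur for abstainers, for whom $0 \succ_u u$ holds by definition, and the two claims you make in that case then go through. With that one observation inserted the proof is complete; as written, your argument would ``prove'' the false statement that independence and the absence of abstainers in $K$ alone suffice.
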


\proof
\emph{(i) $\implies$ (ii).} Let $d$ be a Nash-stable delegation function for $P$. Let us prove that its set of gurus $\Gurus(d)$ satisfy condition (ii). It was noted previously that Nash-stability implies the absence of abstainer in $\Gurus(d)$. Assume that $\Gurus(d)$ is not independent in $\GP$. Then, there exists $i,j \in \Gurus(d)$ such that $(i,j)$ is an arc of $\GP$, that is, $j \in \Acc(i)$. It implies that $i$ prefers to delegate to $j$ rather than remaining a guru. As $\SN$ is complete, $j\in \AttBy(i,d)$ and hence $d$ is not Nash-stable for $i$. Assume now that $\Gurus(d)$ is not absorbing for all non-abstainers. Then there exists a non-abstainer $i \notin \Gurus(d)$ such that for every guru $g$ in $\Gurus(d)$, $(i,g)$ is not an arc of $\GP$, that is, $g \notin \Acc(i)$. Such a voter $i$ prefers to vote herself rather than delegate to any guru in $\Gurus(d)$. Therefore $d$ is not Nash-stable for $i$.
This proves that $\Gurus(d)$ is a kernel of $\GP$.

\emph{(ii) $\implies$ (i).} Consider a subset $K$ of non-abstainers such that $K$ is a kernel of the delegation-acceptability digraph $\GP$. We define a delegation function $d$ by: $d(i) = i$ if $i \in K$; and $d(i) = j$ if $i \notin K$ where $j$ is the voter that $i$ prefers in $K \cup \{0\}$. Note that as $\SN$ is complete, each voter in $\VoterSet\setminus K$ can delegate directly to any voter in $K$. It follows that $\GuruOf(i,d) = d(i)$ for every $i$, and the set of gurus is $\Gurus(d) = K$.
Let us check that the delegation function $d$ is Nash-stable.
First, it holds that $d$ is Nash-stable for every abstainer~$i$. Indeed, the guru of $i$ is $d(i)$, which is her preferred guru in $\Gurus(d) \cup \{0\}$.
Also, we show that $d$ is Nash-stable for any $i \in \Gurus(d)$. We need to prove that $i$ does not prefer to delegate to any other guru, or to abstain. By assumption the set $\Gurus(d) = K$ contains no abstainer. Since $K$ is independent, for every other guru $g \in K$, the arc $(i,g)$ does not exist in the delegation acceptability digraph, meaning that $g \notin \Acc(i)$ and $i \succ_i g$.
Finally, we prove that $d$ is Nash-stable for any non-abstainer $i \notin \Gurus(d)$. Since $i$ has already chosen her preferred guru in $\Gurus(d)$, it is only necessary to check that $i$ does not prefer to vote herself. Because the set $K$ is absorbing, there exists $k \in K$ such that $(i,k)$ is an arc of $\GP$. Hence such $k$ is a guru in $\Gurus(d)$ such that $k \in \Acc(i)$, hence $k \succ_i i$, and $\GuruOf(i,d) \succ_i i$.
\endproof

Note that any digraph is the delegation-acceptability digraph of a preference profile $P$. Indeed, given the digraph, it suffices to build a preference profile $P$ so that every voter prefers to delegate to its out-neighbors, then to vote, then to delegate to other voters, then to abstain. Consequently, in a complete SN, determining if a preference profile admits an equilibrium is equivalent to the problem of determining if a digraph admits a kernel, which is an NP-complete problem~\cite{chvatal1973computational}. %as mentioned above. \MessageFromAdele{il faudra enlever quelques occurrences de cette remarque...}

\begin{theorem}
\label{theo:equivKernelNashStable}
 \textbf{EX} is NP-complete even when the social network is a complete graph.
\end{theorem}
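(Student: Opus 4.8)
The plan is to read Theorem~\ref{theo:equivKernelNashStable} as the immediate payoff of Proposition~\ref{prop:equivGurusKernel} together with the remark that every digraph arises as some delegation-acceptability digraph: these two facts turn the kernel characterization of equilibria into a polynomial-time reduction from the NP-complete kernel existence problem. Membership in NP has already been observed (given a candidate $d$, one computes every $\GuruOf(i,d)$ and checks Nash-stability in polynomial time), so the only remaining task is NP-hardness on complete social networks.

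For hardness I would reduce from the problem of deciding whether a digraph admits a kernel, shown NP-complete in~\cite{chvatal1973computational}. Given an arbitrary digraph $G = (\mathtt{V}, \mathtt{A})$, build an instance of \textbf{EX} on the voter set $\VoterSet = \mathtt{V}$ by taking $\SN$ to be the complete graph on $\VoterSet$ and defining each voter's preference order exactly as in the construction preceding the theorem: voter $i$ ranks her out-neighbors in $G$ first (in an arbitrary fixed order), then voting herself, then all remaining voters, then abstaining. This is clearly computable in time polynomial in the size of $G$ (each order fits in $O(n)$ space, so $O(n^2)$ overall). By construction it realizes $\GP = G$; moreover, since every voter strictly prefers voting to abstaining, we get $\AbstSet = \emptyset$, and the acceptable-guru set $\Acc(i)$ is precisely the out-neighborhood of $i$ in $G$.

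It then remains only to verify correctness. Because $\AbstSet = \emptyset$, the ``$K$ contains no abstainer'' clause of condition (ii) in Proposition~\ref{prop:equivGurusKernel} is vacuously true, so that proposition collapses to the statement that there is an equilibrium $d$ with $\Gurus(d) = K$ if and only if $K$ is a kernel of $\GP = G$. Consequently the constructed instance admits an equilibrium if and only if $G$ admits a kernel, which yields NP-hardness and, with membership in NP, NP-completeness. I expect no substantive obstacle here: essentially all the difficulty has been absorbed into Proposition~\ref{prop:equivGurusKernel} and into the realizability remark. The only points requiring genuine care are confirming that the reduction runs in polynomial time and, in particular, checking that enforcing the absence of abstainers is what correctly reduces condition (ii) to plain kernel existence—otherwise the surviving no-abstainer clause would have to be tracked through the equivalence.
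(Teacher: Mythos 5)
Your proposal is correct and follows essentially the same route as the paper: NP membership by the earlier observation, and NP-hardness via a reduction from kernel existence using exactly the paper's construction (out-neighbors first, then voting, then the remaining voters, then abstention), so that $\GP$ equals the given digraph, there are no abstainers, and Proposition~\ref{prop:equivGurusKernel} collapses to the equivalence between equilibria and kernels. Your added care about the vacuity of the no-abstainer clause and the polynomial size of the reduction is exactly the (implicit) content of the paper's argument.
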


\noindent\textit{Consequences.}  
As a direct consequence of the results of this section, even when $\SN$ is complete, optimization problems \textbf{MINDIS}, \textbf{MINMAXVP} and \textbf{MINABST} are NP-hard as it is NP-hard to decide if their set of admissible solutions is empty or not. We also directly obtain that the decision problem \textbf{MEMB} is NP-complete, by a direct reduction from \textbf{EX}. Indeed, solving problem \textbf{MEMB} for each voter in $\VoterSet\setminus\AbstSet$ yields the answer to problem \textbf{EX}. 

As mentioned above, we also point out that this equivalence is useful to find some interesting subcases. Let us consider for instance the case where there is a symmetry in the preferences in the sense that $i\in \Acc(j)$ if and only if $j\in \Acc(i)$. In this case of symmetrical preference profiles, the delegation-acceptability digraph has the arc $(i,j)$ iff it has the arc $(j,i)$ (it is symmetrical). Then, any inclusion maximal independent set is a kernel. Hence, for any non-abstainer $i$ there exists  an equilibrium in which $i$ is a guru (take a maximal independent set containing $i$).

\begin{proposition}
In a complete social network, there always exists an equilibrium when preferences are symmetrical. Moreover, the answer to {\bf MEMB} is always yes.
\end{proposition}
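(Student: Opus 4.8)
The plan is to leverage the characterization of equilibria established in Proposition~\ref{prop:equivGurusKernel}: in a complete SN, the sets of gurus of equilibria are exactly the kernels of the delegation-acceptability digraph $\GP$. Note that $\GP = (\VoterSet \setminus \AbstSet, \AP)$ already excludes abstainers from its vertex set, so the ``no abstainer'' part of condition (ii) is automatically satisfied by any kernel of $\GP$. Thus it suffices to exhibit a kernel of $\GP$, and for the \textbf{MEMB} part, a kernel containing a prescribed non-abstainer $i$.

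First I would observe that symmetry of the preferences translates directly into symmetry of the digraph: since $i \in \Acc(j)$ iff $j \in \Acc(i)$, the arc $(i,j)$ belongs to $\AP$ iff $(j,i)$ does. Hence independence and absorption in $\GP$ can both be read off the underlying undirected graph, which is what makes the argument go through.

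The key step is then the claim that every inclusion-maximal independent set $K$ of a symmetric digraph is a kernel. Independence holds by definition of $K$. For absorption, take any vertex $u \notin K$; by maximality $K \cup \{u\}$ is not independent, so there is an arc between $u$ and some $k \in K$, and by symmetry the arc $(u,k)$ exists, so $k$ absorbs $u$. Since a maximal independent set always exists, $\GP$ admits a kernel, and Proposition~\ref{prop:equivGurusKernel} then yields an equilibrium.

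Finally, for \textbf{MEMB} with a given non-abstainer $i$, I would start from the independent set $\{i\}$ and greedily extend it to an inclusion-maximal independent set $K$ containing $i$. By the claim, $K$ is a kernel, and applying Proposition~\ref{prop:equivGurusKernel} gives an equilibrium $d$ with $\Gurus(d) = K \ni i$. I do not anticipate any genuine obstacle here; the argument is essentially the classical fact that symmetric (i.e.\ undirected) digraphs always have kernels, with the only point requiring care being the already-noted observation that the vertices of $\GP$ are exactly the non-abstainers, so condition (ii) is met without extra work.
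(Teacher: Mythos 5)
Your proof is correct and follows essentially the same route as the paper: symmetry of preferences makes $\GP$ a symmetric digraph, every inclusion-maximal independent set is then a kernel, and for \textbf{MEMB} one extends $\{i\}$ to a maximal independent set and applies Proposition~\ref{prop:equivGurusKernel}. The only difference is that you spell out the details (no self-loops, the absorption argument, the abstainer-free vertex set) that the paper leaves implicit.
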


More generally when preferences are symmetrical, given a set of non-abstainers it is easy to decide if there exists an equilibrium in which every voter in this set is a guru: we just have to check whether the set is independent or not in $\GP$.

\subsection{Sparse social networks} \label{subsec:sparse}
As the problem \textbf{EX} is NP-complete when the social network is a complete graph, it remains NP-complete in any class of graphs that contain cliques, such as interval graphs, split graphs, dense graphs,\dots In this section, we focus on classes of graphs that {\it do not} contain large cliques. We first deal with bounded degree graphs, and show that \textbf{EX} remains NP-hard in social networks of degree bounded by 5 (Theorem~\ref{th:hardnessboundeddegree}). We then focus on graphs of bounded treewidth. Interestingly, while we will see in Section~\ref{sec:trees} that \textbf{EX} is polynomial if the social network is a tree (actually, an equilibrium always exists in trees), we prove here that \textbf{EX} is W[1]-hard when parameterized by the treewidth of the social network (Theorem~\ref{th:w1hardness}).   

We refer the reader to \cite{BK08} for the standard notion of treewidth of graphs. A problem is said to be fixed parameter tractable (FPT) with respect to some parameter $k$ (the treewidth of the graph for us) if it can be solved by an algorithm whose complexity is $O(f(k)|I|^c)$ for some function $f$ and constant $c$ (where $|I|$ is the size of the instance). A W[1]-hard problem is not FPT unless FPT=W[1]. We refer the reader to~\cite{DowneyFellowsBook} for notions of parameterized complexity.

\begin{theorem}\label{th:hardnessboundeddegree}
Problem \textbf{EX} is NP-complete even if the maximum degree of the SN is at most 5.
\end{theorem}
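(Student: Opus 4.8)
The plan is to prove NP-hardness by a reduction that reuses the machinery of Proposition~\ref{prop:equivGurusKernel}: since on a complete SN the equilibria are exactly the kernels of the delegation-acceptability digraph, the natural target is \textbf{KERNEL}, i.e.\ deciding whether a digraph $G=(\mathtt{V},\mathtt{A})$ admits a kernel, which is NP-complete~\cite{chvatal1973computational}. The only genuinely new difficulty compared with Theorem~\ref{theo:equivKernelNashStable} is that a bounded-degree SN can no longer let every voter delegate \emph{directly} to each of its acceptable gurus, so I must both start from a digraph of bounded degree and use transitivity of delegations to route votes along SN paths. Membership in NP was already observed, so it suffices to establish NP-hardness.

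First I would reduce from \textbf{KERNEL} restricted to digraphs $G$ of bounded total (in-plus-out) degree. Given such a $G$, I build $\SN$ on the vertex set $\mathtt{V}$, adding the undirected edge $\{u,v\}$ whenever $(u,v)\in\mathtt{A}$ or $(v,u)\in\mathtt{A}$, and I define preferences by $\Acc(u)=N^+(u)$ (the out-neighbours of $u$ in $G$), where each voter ranks its out-neighbours first, then itself, then the remaining voters, then abstention; in particular there are no abstainers. With this encoding the maximum degree of $\SN$ is at most $\max_{v}\bigl(d^+(v)+d^-(v)\bigr)$, so the bound on $\SN$ is inherited from the degree bound on $G$.

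The core of the argument is to show that $d$ is an equilibrium of $(\SN,P)$ if and only if $\Gurus(d)$ is a kernel of $G$, mirroring the two implications of Proposition~\ref{prop:equivGurusKernel}. The point that makes the argument go through is that any guru a voter accepts is, by the arc-to-edge correspondence, one of its SN-neighbours. Thus for the forward direction: if two gurus were joined by an arc, the tail would have the head (a guru, hence reachable by every neighbour) in $\AttBy$ and would deviate, giving \emph{independence}; and a stable non-abstainer $u\notin\Gurus(d)$ necessarily has its guru $g\in\Gurus(d)$ with $g\succ_u u$, i.e.\ $(u,g)\in\mathtt{A}$, giving \emph{absorption}. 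Conversely, from a kernel $K$ I let every $u\notin K$ delegate directly to its most preferred accepted guru in $K$ (which exists by absorption and is adjacent by the arc-to-edge correspondence), and I verify stability of gurus and non-gurus exactly as in the proof of Proposition~\ref{prop:equivGurusKernel}; the key observation is that independence of $K$ forbids a guru from reaching any \emph{accepted} guru, transitively or not.

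The step I expect to be the main obstacle is the degree accounting together with the control of transitivity. On the one hand, I must argue explicitly that transitivity cannot hand a voter an acceptable guru it would not already have in $G$: since every accepted guru is a direct neighbour, the set of accepted \emph{and} reachable gurus of a voter coincides with its out-neighbours that are gurus, so no spurious beneficial deviation is created by long delegation chains — this must be shown, not assumed. On the other hand, obtaining the precise constant $5$ requires the source digraph to have both in- and out-degree small; if the cleanest citable kernel-hardness result does not bound both, I would reduce the degree myself by replacing each high-in-degree vertex $v$ with a bounded-degree \emph{funnel} of auxiliary pass-through voters (a small tree whose leaves play the role of the in-neighbours of $v$ and whose root is $v$), each auxiliary voter having its parent as unique acceptable guru. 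The delicate part is then to certify that these transitive intermediaries never become spurious gurus nor alter the attainable-guru sets, so that equilibria of the enlarged SN still project exactly onto kernels of $G$, while every vertex keeps degree at most $5$. Once these gadget-correctness and degree-counting claims are in place, the reduction is polynomial and the equivalence yields NP-hardness; combined with membership in NP, this gives NP-completeness even for $\SN$ of maximum degree at most $5$.
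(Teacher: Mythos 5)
Your route is genuinely different from the paper's: the paper reduces from 3-SAT-4, building variable gadgets (pairs of voters accepting each other) and clause gadgets (3-cycles attached to the literal voters), which yields maximum degree 5 directly; you instead reduce from \textsc{Kernel} on bounded-degree digraphs, reusing the correspondence of Proposition~\ref{prop:equivGurusKernel}. The core of your reduction is sound: taking $\SN$ to be the underlying undirected graph of $G$ and $\Acc(u)=N^+(u)$, every acceptable guru is an SN-neighbour, so the attainable-and-acceptable gurus of $u$ are exactly its out-neighbours that are gurus, and the equivalence ``$d$ equilibrium $\iff$ $\Gurus(d)$ kernel of $G$'' goes through exactly as you sketch. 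What this approach buys is conceptual economy (one lemma does all the work, and it explains \emph{why} bounded degree does not help); what the paper's approach buys is self-containedness, since 3-SAT-4 is a standard bounded-occurrence problem and no degree-restricted kernel result is needed.

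The genuine gap is in how you secure the degree bound on the source problem. First, your fallback funnel gadget is broken as specified: if each auxiliary voter's \emph{unique acceptable guru is its parent}, then an auxiliary at depth $\geq 2$ that delegates to its parent ends up (transitively) with the root $v$ as guru, which it does \emph{not} accept; it therefore prefers to vote, the chain never forms, and in-neighbours of $v$ can never reach $v$ through the funnel even when $v$ votes. The repair is to make every auxiliary in $v$'s funnel accept \emph{the root $v$} (and nothing else): then one checks by induction from $v$ downward that in any equilibrium the funnel voters all delegate towards $v$ when $v$ votes and all vote when $v$ does not, which is what the correspondence needs. Second, funnels only compress \emph{in}-degree; you give no gadget for high \emph{out}-degree, and that side is substantially harder, because a ``distributor'' auxiliary would have to route $u$'s delegation towards whichever out-neighbour $u$ wants, yet the routing choice belongs to the auxiliary, whose incentives you would have to engineer. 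As written, your proof therefore rests on an uncited hardness result for \textsc{Kernel} with \emph{both} in- and out-degree bounded. Such a result does exist (Fraenkel, 1981: \textsc{Kernel} is NP-complete on planar digraphs with in-degree at most 2, out-degree at most 2, total degree at most 3); citing it closes the gap, makes the funnels unnecessary, and in fact yields the stronger bound of maximum degree 3. Without that citation, or with the gadget as you specified it, the argument does not stand.
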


\proof
We make a reduction from the NP-complete problem 3-SAT-4~\cite{Tovey84}, which is the restriction of the 3-SAT problem where each variable appears at most 4 times.  Two voters $v_{i}^t$ and $v_{i}^f$ are created for each truth variable $x_i$. These voters accept each other as possible gurus and are connected in the SN. Three voters are created for clause $c_j$: $v_{j1}^c, v_{j2}^c, v_{j3}^c$. These voters are connected in the SN in a 3-cycle such that $v_{ji}^c$ accepts $v_{j(i\mod{3}+1)}^c$ as a guru but rejects $v_{j(i-2\mod{3}+1)}^c$ as a guru. Moreover, $v_{j1}^c$ is connected to the three vertices $v^t_i$ or $v^f_i$ corresponding to the literals of the clause. Note that all these voters are non-abstainers. 
We illustrate the reduction by showing how a clause $c_1 = x_1 \lor x_2 \lor \lnot x_3$ is handled (see also Fig.\ref{red:3-SAT-4}):

\begin{minipage}[c]{.46\linewidth}
\begin{itemize}
    \item $\Acc(v_{11}^c) = \{v_{12}^c,v_{1}^t, v_{2}^t, v_{3}^f\}$
    \item $\Acc(v_{12}^c) = \{v_{13}^c,v_{1}^t, v_{2}^t, v_{3}^f\}$
    \item $\Acc(v_{13}^c) = \{v_{11}^c,v_{1}^t, v_{2}^t, v_{3}^f\}$
\end{itemize}
\end{minipage}\hfill
\begin{minipage}[c]{.46\linewidth}
\begin{itemize}
    \item $\Acc(v_{1}^t) = \{v_{1}^f\}$ and $\Acc(v_{1}^f) = \{v_{1}^t\}$ 
    \item $\Acc(v_{2}^t) = \{v_{2}^f\}$ and $\Acc(v_{2}^f) = \{v_{2}^t\}$ 
    \item $\Acc(v_{3}^t) = \{v_{3}^f\}$ and $\Acc(v_{3}^f) = \{v_{3}^t\}$ 
\end{itemize}
\end{minipage}

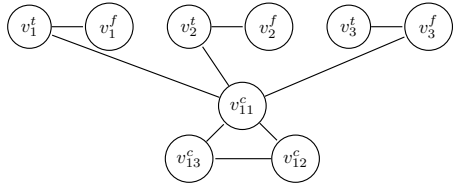
\begin{figure}[!ht] 
   \centering
   \scalebox{0.7}{\begin{tikzpicture}[-,>=stealth',shorten >=1pt,auto,node distance=3cm,semithick]

  \node[circle,draw,text=black]    (A)   at (0,0)  {$v_{11}^c$};
  \node[circle,draw,text=black]    (B)   at (1, -1)  {$v_{12}^c$};
  \node[circle,draw,text=black]    (C)   at (-1,-1) {$v_{13}^c$};
  
  \node[circle,draw,text=black] (D)   at (-4,1.5)    {$v_{1}^t$};
  \node[circle,draw,text=black] (E)   at (-1.,1.5)    {$v_{2}^t$};
  \node[circle,draw,text=black] (F)   at (2, 1.5)    {$v_{3}^t$};

  \node[circle,draw,text=black] (G)   at (-2.5,1.5)    {$v_{1}^f$};
  \node[circle,draw,text=black] (H)   at (0.5,1.5)    {$v_{2}^f$};
  \node[circle,draw,text=black] (I)   at (3.5,1.5)    {$v_{3}^f$};

  \path (A) edge  node {} (B)
  (B) edge  node {} (C)
  (C) edge  node {} (A)
  (A) edge  node {} (D)
  (A) edge  node {} (E)
  (A) edge  node {} (I)
  (D) edge  node {} (G)
  (E) edge  node {} (H)
  (F) edge  node {} (I);
        	  \end{tikzpicture}}
    \caption{\label{red:3-SAT-4}Illustration of the social network in the reduction.}
\end{figure}

%\MessageFromAdele{Pourquoi les fleches sur la figure?} \MessageFromHugo{Bien vu !}

The induced instance has  
maximum degree bounded by 5. We conclude the proof by proving that the 3-SAT-4 instance is satisfiable iff there exists an equilibrium.  

Suppose first that there is a truth assignment. Then consider the delegation function where: 1) a voter $v^t_i$ (resp. $v^f_i$) votes if the corresponding literal is true, otherwise she delegates to $v^f_i$ (resp. $v^t_i$); 2) a voter $v^c_{j1}$ delegates to a voter corresponding to a true literal of this clause (the best one if there are several of them); 3) both $v^c_{j2}$ and $v^c_{j3}$ delegate to $v^c_{j1}$.\\
 Voters $v^c_{j2}$ and $v^c_{j3}$ have the same guru as $v^c_{j1}$. We can easily see that this is an equilibrium.
 
 Conversely, suppose that we have an equilibrium. For each $i$, exactly one voter among $v^t_i$ and $v^f_i$ votes. We define the truth assignment where $x_i$ is true (resp. false) if $v^t_i$ votes (resp. $v^f_i$ votes). In an equilibrium, the voter $v^c_{j1}$ has to delegate to some voter $v^t_i$ or $v^f_i$ (otherwise the 3 voters of the clause $c_j$ cannot be in a stable state, as they would form a 3-cycle). Then, the guru of $v^c_{j1}$ must be a voter $v^t_i$ or $v^f_i$ that she accepts, i.e., this guru must correspond to a literal of $c_j$. %this voter $v^t_i$ or $v^f_i$ must vote (if she delegates to $v^t_i$ or $v^f_i$, this is not an acceptable guru for $v^c_{j1}$). 
 Hence, the truth assignment satisfies this clause.
\endproof

\begin{theorem}\label{th:w1hardness}
Problem \textbf{EX} is W[1]-hard when parameterized by the treewidth of the social network.
\end{theorem}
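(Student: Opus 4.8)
The plan is to give a parameterized reduction from a problem that is already W[1]-hard for parameter $k$ \emph{and} whose natural encoding has treewidth $O(k)$. A convenient source is the classical \textbf{Grid Tiling} problem: we are given a $k\times k$ array of sets $S_{i,j}\subseteq[n]\times[n]$ and must pick $s_{i,j}\in S_{i,j}$ so that horizontally adjacent cells agree on their first coordinate and vertically adjacent cells agree on their second coordinate. (A serialized form of \textbf{Multicolored Clique} would do equally well.) The reason to start from such a grid-structured problem is that the social network I build will inherit a grid-like shape whose treewidth is $O(k)$ \emph{independently of $n$}, which is exactly what a W[1]-hardness-for-treewidth reduction requires.

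The construction has one \emph{cell gadget} per position $(i,j)$, encoding a selected element of $S_{i,j}$, and \emph{consistency gadgets} linking each cell only to its grid-neighbours. I would reuse the \emph{3-cycle} of Example~\ref{ex1} (the instance with no equilibrium) as the basic enforcing device, exactly as in the proof of Theorem~\ref{th:hardnessboundeddegree}: a triangle $v_1v_2v_3$ in $\SN$ carrying the cyclic preferences of the 3-cycle has no stable state \emph{unless} one of its vertices can escape by delegating, through the network, to a neighbour whose guru it accepts. Attaching such a triangle to a cell gadget forces the encoded selection to be legal, and attaching one to a consistency gadget forces agreement of the relevant coordinate: the triangle of the $(i,j)$--$(i,j{+}1)$ gadget admits a stable escape iff the first coordinates selected in the two cells coincide, and symmetrically for vertical neighbours. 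The selected value of a cell is transmitted to its neighbours by a short delegation chain, so that the identity of the guru reached along that chain encodes the chosen coordinate.

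With these gadgets in place, correctness follows the familiar template. From a valid tiling I build a delegation function in which every escape triangle delegates out successfully, and I check it is Nash-stable. Conversely, from an equilibrium I read off, cell by cell, a selection; the stability of each consistency triangle then guarantees that every horizontal and vertical agreement constraint holds, yielding a valid tiling. For the treewidth bound I would exhibit an explicit path decomposition following the grid: process cells column by column, keeping in each bag only the $O(k)$ interface vertices through which the active column communicates with the next one, together with the constant-size internals of the single cell currently being processed. Since each cell gadget touches only its four grid-neighbours through a bounded number of interface vertices, every bag has size $O(k)$, so the social network has treewidth $O(k)=f(k)$, and the reduction runs in FPT time.

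The delicate point—where I expect the real work to lie—is reconciling two conflicting requirements: each cell must choose from a domain of size $n$, yet the interface between neighbouring gadgets must stay of size independent of $n$ so that the treewidth remains a function of $k$ alone. Encoding a value by a dedicated vertex per element would blow up the interface (and hence the treewidth) by a factor $n$; instead the chosen coordinate must be carried by the \emph{identity of the guru} reached through a single delegation chain, and the consistency triangles must be engineered so that ``escape is possible'' is equivalent to ``the two incoming chains reach matching coordinates.'' The hard part is ruling out parasitic equilibria: one must verify that no unintended stable delegation lets a triangle escape while a constraint is violated, and that the cell gadgets admit no spurious stable state that decouples the first and second coordinates of a cell. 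Establishing the absence of such parasitic equilibria, rather than the grid layout or the bag-size count, is the genuine content of the argument.
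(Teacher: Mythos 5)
Your overall strategy (reduce from a grid-structured W[1]-hard problem and use the 3-cycle of Example~\ref{ex1} as the enforcing device, exactly as in Theorem~\ref{th:hardnessboundeddegree}) is reasonable in outline, and it differs from the paper, which reduces from \emph{list coloring} parameterized by treewidth. However, there is a genuine gap at the heart of your construction: the single consistency triangle you describe cannot exist. A triangle $v_1v_2v_3$ carrying the cyclic preferences of the 3-cycle can reach a stable state if and only if some $v_\ell$ has a neighbour outside the triangle whose guru lies in $\Acc(v_\ell)$. Since the acceptability sets are fixed in advance, the set of pairs (value read from cell $(i,j)$, value read from cell $(i,j{+}1)$) for which escape is possible is always of the form $\{(x,y) \mid x \in S_1 \text{ or } y \in S_2\}$, i.e., a union of ``rows'' and ``columns'' in the product of the two domains. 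Such a set can never equal the diagonal $\{(a,a) \mid a \in [n]\}$ once $n \geq 2$, so no single triangle can be stable exactly when the two coordinates \emph{coincide}. Escape conditions of this kind can only express NAND constraints (``not both cell $A$ selects $a$ and cell $B$ selects $b$''), never equality.

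The fix is to use one triangle per \emph{forbidden} pair: for every $a \neq b$, a triangle that escapes unless cell $(i,j)$ selects first coordinate $a$ and cell $(i,j{+}1)$ selects first coordinate $b$; an equilibrium then exists iff no forbidden pair is chosen, i.e., iff the coordinates agree. This costs $O(n^2)$ triangles per grid adjacency, which is still a polynomial (hence FPT) reduction, and each triangle touches only the two interface vertices, so your treewidth bound survives. This is precisely the mechanism of the paper's reduction, which sidesteps the quadratic blow-up by starting from list coloring: there the binary constraints are \emph{disequalities}, whose forbidden pairs are just $(i,i)$ for each common color $i$, so one triangle per color per edge suffices, and the parameter (the treewidth of the coloring instance) transfers directly with no grid serialization at all. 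Finally, your proposal also leaves the cell gadget unspecified: you need a device forcing exactly one element of $S_{i,j}$ to be selected in any equilibrium (the analogue of the paper's vertices $c^i_u$, which mutually accept one another so that exactly one of them votes). As written, both this selection gadget and the per-pair consistency design are missing, and they are exactly where the content of the proof lies.
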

\proof
We make a reduction from the list coloring problem. This problem takes as input a graph $G$, together with an assignment to each vertex $v$ of a set of colors $C_v$. The problem is to determine whether it is possible to choose a color for vertex $v$ from the set of permitted colors $C_v$, for each vertex, so that the obtained coloring is proper. This problem is W[1]-hard, parameterized by the treewidth of $G$ \cite{FellowsFLRSST11}. 

To perform our reduction we show how to create the SN starting from graph $G$. The SN contains all nodes of $G$. Let $(u,v)$ be an edge of $G$. For each color $i \in C_u\cap C_v$ create three voters $d^i_1, d^i_2$ and $d^i_3$ (we omit the reference to $u,v$ to keep notation readable) such that $u$ is connected to $d^i_1$, $d^i_2$ is connected to $v$ and $d^i_1, d^i_2$ and $d^i_3$ are connected in a triangle. For each color $i \in C_u$ create a voter $c^i_u$ which is only connected to $u$ in the SN. This ends the construction of the SN. We illustrate this construction in Figure~\ref{red:tw}, representing the SN obtained for two adjacent vertices $(u,v)$ with $C_u=\{1,2,3\}$ and $C_v=\{1,3,4,5\}$. Colors $1$ and $3$ are common to $C_u$ and $C_v$, corresponding to the 2 triangles $(d^1_1,d^1_2,d^1_3)$ and $(d^3_1,d^3_2,d^3_3)$.

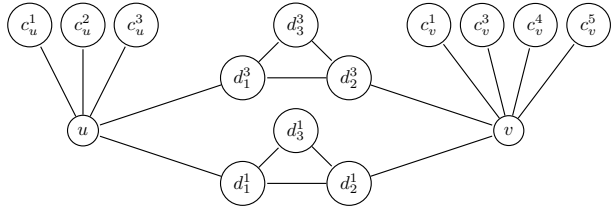
\begin{figure}[!ht] 
   \centering
    \scalebox{0.7}{\begin{tikzpicture}[-,>=stealth',shorten >=1pt,auto,node distance=3cm,semithick]

  \node[circle,draw,text=black]    (A)   at (0,0)  {$d^1_3$};
  \node[circle,draw,text=black]    (B)   at (1, -1)  {$d^1_2$};
  \node[circle,draw,text=black]    (C)   at (-1,-1) {$d^1_1$};

  \node[circle,draw,text=black]    (D)   at (0,2)  {$d^3_3$};
  \node[circle,draw,text=black]    (E)   at (1, 1)  {$d^3_2$};
  \node[circle,draw,text=black]    (F)   at (-1,1) {$d^3_1$};

  \node[circle,draw,text=black]    (G)   at (-4,0)  {$u$};
  
  \node[circle,draw,text=black] (H)   at (-5,2)    {$c^1_u$};
  \node[circle,draw,text=black] (I)   at (-4,2)    {$c^2_u$};
  \node[circle,draw,text=black] (J)   at (-3, 2)    {$c^3_u$};

  \node[circle,draw,text=black]    (K)   at (4,0)  {$v$};
  
  \node[circle,draw,text=black] (L)   at (2.5,2)    {$c^1_v$};
  \node[circle,draw,text=black] (M)   at (3.5,2)    {$c^3_v$};
  \node[circle,draw,text=black] (N)   at (4.5, 2)    {$c^4_v$};
   \node[circle,draw,text=black] (O)   at (5.5, 2)    {$c^5_v$};

  \path (A) edge  node {} (B)
  (B) edge  node {} (C)
  (C) edge  node {} (A)
  (D) edge  node {} (E)
  (E) edge  node {} (F)
  (F) edge  node {} (D)
  (C) edge  node {} (G)
  (F) edge  node {} (G)
  (G) edge  node {} (H)
  (G) edge  node {} (I)
  (G) edge  node {} (J)
  (B) edge  node {} (K)
  (E) edge  node {} (K)
  (K) edge  node {} (L)
  (K) edge  node {} (M)
  (K) edge  node {} (N)
  (K) edge  node {} (O);
        	  \end{tikzpicture}}
    \caption{\label{red:tw}Illustration of the reduction.}
\end{figure}

%\MessageFromAdele{ajout\'e l'arc $v$ $c^5_v$ qui avait saut\'e au ctrl c ctrl v}

We now argue that the treewidth of this SN is the maximal value between 4 and the treewidth of $G$. Indeed, consider a tree decomposition $T$ of $G$. For each color $i \in C_u$, create a node that contains the bag $\{u,c^i_u\}$ and connect it to a node of $T$ that contains $u$. For each color $i \in C_u\cap C_v$ create a node that contains the bag $\{d^i_1, d^i_2, d^i_3, u, v\}$ and connect it to a node of $T$ that contains $u$ and $v$. The claim comes from the fact that we obtain in this way a tree decomposition of the SN.

We now detail the preferences of the voters. First note that there are no abstainers. Each vertex $u\in G$  accepts the vertices $c^i_u,i\in C_u$ as possible gurus.  
Each vertex $c^i_u$ accepts the vertices $c^j_u$, for $j\neq i$ in $C_u$, as possible gurus.
Let us now consider vertices $d^i_1, d^i_2$ and $d^i_3$ resulting from a color $i \in C_u\cap C_v$ for some edge $(u,v)$. 
Voter $d^i_1$ accepts $d^i_2$ as possible guru, voter $d^i_2$ accepts $d^i_3$ as possible guru and voter $d^i_3$ accepts $d^i_1$ as possible guru. Voter $d^i_1$ accepts also the voters $c^j_u$  for all $j\neq i$, and similarly voter $d^i_2$ accepts the voters $c^j_v$ for all $j\neq i$. Voter $d^i_2$ prefers any $c^j_v$ ($j\neq i$) to $d^i_3$.
%\MessageFromAdele{je suis pas sure de voir ou sert cette derniere hyp?} \MessageFromHugo{C'est pour pouvoir deteminer l'equilibre quand u et v sont tous les deux d'une couleur differente a i. Car dans ce cas $d^i_2$ peut deleguer a $d^i_3$ mais aussi a $v$ pour avoir comme guru un $c^j_v$.} \MessageFromAdele{Ok c'est bon !}

We now show that the list coloring instance admits a proper coloring iff the created instance admits an equilibrium. 

Given a proper coloring of $G$ we obtain an equilibrium as follows. First, for each vertex $u$, if $u$ is colored in $G$ with color $i(u)$, then $c^{i(u)}_u$ votes and $u$ delegates to $c^{i(u)}_u$. Then, each vertex $c^i_u:i \in C_u$ with $i\neq i(u)$ delegates to $u$ in order to have $c^{i(u)}_u$ as guru. 
Let us now consider vertices $d^i_1, d^i_2$ and $d^i_3$ resulting from a color $i \in C_u\cap C_v$ for some edge $(u,v)$. If $u$ and $v$ are not assigned color $i$ ($i\neq i(u)$ and $i\neq i(v)$), then $d^i_1$ delegates to $u$, $d^i_2$ delegates to $v$ and $d^i_3$ votes. In this case, $c^{i(u)}_u$ is the guru of $d^i_1$ (note that this is an acceptable guru since $i\neq i(u)$) and $c^{i(v)}_v$ is the guru of $d^i_2$. 
Otherwise, either $u$ or $v$ is assigned color $i$ (but not both). Let us assume first that it is $u$. In this case, $d^i_1$ votes, $d^i_3$ delegates to $d^i_1$, and $d^i_2$ delegates to $v$. Voter $c^{i(v)}_v$ is then the (acceptable) guru of $d^i_2$. 
Finally, let us consider the case where $v$ is assigned color $i$. Then $d^i_1$ delegates to $u$, $d^i_3$ votes and $d^i_2$ delegates to $d^i_3$. The guru of $d^i_1$ is $c^{i(u)}_u$, the one of $d^i_2$ is $d^i_3$. We have in all cases an equilibrium.

Given an equilibrium of the instance, we first notice that for each $u$, there is exactly one vertex $c^i_u:i\in C_u$ that votes. Indeed, there cannot be more than one, because in this case, 1) $u$ would delegate to one of them and then 2) the other gurus in $c^i_u:i\in C_u$ would change to delegate to $u$ in order to have the same guru. There cannot be none: indeed voters $c^i_u:i\in C_u$ are non-abstainers, hence if they do not vote they must delegate in order to be represented by an acceptable guru. And yet the only acceptable gurus for them are the other members of $c^i_u:i\in C_u$.
Then each vertex $u$ in $G$ delegates to this particular voter $c^{i(u)}_u$. We show that coloring in $G$ each vertex $u$ with color $i(u)$ is a proper list coloring. On the contrary, suppose that two adjacent vertices $u$ and $v$ receive the same color $i=i(u)=i(v)$. We focus on the vertices $d^i_1,d^i_2,d^i_3$ (corresponding to this color $i$ for this edge $(u,v)$) in SN. $d^i_1$ cannot delegate to $u$ in the equilibrium since $c^{i}_u$ would be her guru, and she does not accept $c^{i}_u$ as a guru. Similarly, $d^i_2$ cannot delegate to $v$ in the equilibrium. Then, $d^i_1,d^i_2,d^i_3$ form a 3-cycle which cannot reach a stable state, contradiction.
\endproof

Another parameter that is worth being considered is the maximal cardinal of a set of acceptable gurus, where the maximum is taken over all voters: $\mathtt{maxa} = \max_{i\in \VoterSet} |\Acc(i)|$. This number is likely to be small in practice. Would this assumption help for solving  problem \textbf{EX}? 
Unfortunately, in the proof of Theorem~\ref{th:hardnessboundeddegree}, $\mathtt{maxa}$ is bounded above by 4, so the problem remains NP-hard when both $\mathtt{maxa}$ and the maximum degree are bounded. An interesting question would be to determine whether the problem becomes FPT when parameterized by the treewidth and $\mathtt{maxa}$. We leave this as an open question.

\section{Algorithms on Tree Social Networks}\label{sec:trees}

\subsection{Equilibria and trees}

In this section, we first answer the question of characterizing social networks in which an equilibrium always exists. It turns out that such social networks are exactly trees.

\begin{theorem} \label{thrm:treeExistence}
If $\SN$ is a tree, then for any preference profile there exists an equilibrium.
\end{theorem}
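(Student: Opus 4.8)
The plan is to prove existence constructively, by a bottom-up induction on a \emph{rooted} version of the tree. The delicate point is a two-sided dependency: whether a voter is Nash-stable depends on the guru of her parent, which in turn depends on choices made higher up in the tree, so one cannot simply fix delegations from the leaves upward without looking back down. The construction is organized to resolve exactly this coupling.

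First I would root $\SN$ at an arbitrary vertex $r$, and for every vertex $u$ let $\Tu$ denote the subtree hanging at $u$. Since $\SN$ is a tree, the only edge joining $\Tu$ to the rest of the network is the one from $u$ to its parent, and $u$ is the only vertex of $\Tu$ incident to it; hence the configuration outside $\Tu$ influences $\Tu$ only through the guru presented to $u$ by her parent. This motivates a notion of a \emph{rooted-stable configuration of $\Tu$}: a delegation function on $\Tu$ in which $u$ does not delegate to her parent and in which every vertex of $\Tu$ satisfies the Nash-stability condition, with the single exception that the comparison of $u$'s guru against her parent's guru is deferred (all of $u$'s comparisons against her children's gurus and against voting/abstaining are required to hold). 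I would prove by induction, from the leaves to $r$, that every subtree admits a rooted-stable configuration; applying this at $r$, which has no parent, then yields a genuine equilibrium.

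The base case (a leaf) is immediate: she votes if she is a non-abstainer and abstains otherwise. For the inductive step at a vertex $u$ with children $u_1,\dots,u_k$, I would start from rooted-stable configurations of $\Tree_{u_1},\dots,\Tree_{u_k}$, with respective root-gurus $g_1,\dots,g_k$, and let $u$ pick her favourite option $g^\ast$ among voting (guru $u$), abstaining (guru $0$), and delegating to a child $u_j$ (guru $g_j$, which is cycle-free since no child delegates upward). Fixing $g^\ast$ makes $u$ internally stable by maximality, but it may now present to some of her children a guru they prefer to their own. The crucial step is a top-down ``flooding'': starting from $u$, I let a descendant $w$ switch to delegating toward her parent — thereby acquiring guru $g^\ast$ — exactly when her parent already carries guru $g^\ast$ and $w$ strictly prefers $g^\ast$ to her current guru. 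Because a vertex can reach $g^\ast$ only through a chain of upward delegations to $u$, the set of vertices ending with guru $g^\ast$ is a connected region containing $u$, and membership is decided contiguously along tree paths.

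Finally I would verify stability of the resulting configuration. For a vertex $w$ inside the $g^\ast$-region, the facts that her former guru was at least as good (for $\succ_w$) as $0$, as $w$, and as each of her children's gurus (inductive internal stability), combined with $g^\ast \succ_w (\text{former guru})$, give that $g^\ast$ dominates all of $w$'s current alternatives, so $w$ is stable. For a vertex $w$ outside the region whose parent lies in the region, non-membership means precisely that $w$ does not prefer $g^\ast$, so she is stable against her parent; and her own subtree is unchanged (a child of $w$ could have joined the region only had $w$ done so), so her internal stability is inherited. Vertices untouched by the flooding keep their previous stability, and $u$ remains stable because $g^\ast$ was chosen as her best reachable option. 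The only real obstacle is the edge-coupling described above, and the whole argument hinges on the observation that $g^\ast$ propagates only along contiguous upward paths, which is what keeps the flooded region internally consistent; everything else is bookkeeping of the Nash-stability inequalities.
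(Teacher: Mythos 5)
Your bottom-up framework (rooted-stable configurations per subtree, combined top-down) is close in spirit to the local-equilibrium machinery the paper develops for its optimization results, but your ``flooding'' step has a genuine gap: the rule only governs vertices that are \emph{not} already delegating to their parent. A vertex that was already delegating upward gets dragged into the $g^\ast$-region automatically the moment an ancestor on its delegation chain switches, with no preference check whatsoever, and your stability verification for region members explicitly relies on the premise $g^\ast \succ_w (\text{former guru})$ --- which holds only for vertices that \emph{actively} switched, not for dragged ones. Concretely, take the path $a-b-c$ rooted at $a$, with preferences $a \succ_a b \succ_a 0$ for $a$; $\ a \succ_b b \succ_b 0 \succ_b c$ for $b$; $\ b \succ_c c \succ_c 0 \succ_c a$ for $c$. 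The bottom-up pass on the subtree of $b$ yields ``$b$ votes, $c$ delegates to $b$'' with label $b$. At the root, $a$ picks $g^\ast=a$; the flooding makes $b$ switch to delegating to $a$ (since $a \succ_b b$), and $c$, who already delegates to $b$, inherits guru $a$ even though $c \succ_c a$: she strictly prefers to vote, so the output is not Nash-stable (an equilibrium does exist here: $a$ and $c$ vote, $b$ delegates to $a$). Repairing this requires letting dragged vertices detach and fall back to a stable configuration of their own subtree, and that cascades: you then need, for every subtree, a stable configuration for \emph{every} possible guru presented at its root, i.e., a substantially stronger induction hypothesis --- essentially the per-label local equilibria of Proposition~\ref{prop:dynprogtree}.

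The paper's proof avoids top-down propagation entirely by inducting on \emph{leaf removal} rather than on subtree composition: remove a leaf $l$ with parent $p$, find an equilibrium of the smaller tree, and add $l$ back with her best choice among her few options. The only delicate case, $l \in \Acc(p)$ with $l$ a non-abstainer, is handled by a substitution trick: recurse on the tree without $l$ in which $p$ takes the place of $l$ in every preference list, so that ``$p$ votes'' in the reduced instance encodes ``$p$ delegates to $l$ and $l$ votes'' in the original one. Since the vertex added back is a leaf, its reinsertion can only affect $p$, and no guru ever needs to be propagated down a chain of earlier decisions.
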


\begin{proof}
We proceed by induction. The case of a tree of 1 voter is trivial. Consider the result true up to $k$ voters and consider a tree of $k+1$ voters. Root the tree at some arbitrary vertex. Consider $l$ a leaf  %\MessageFromAdele{of largest depth --> on dirait que ce n'est pas necessaire a la preuve (?) si vous confirmez autant l'enlever}\MessageFromHugo{Je suis ok.} 
and $p$ the parent of this leaf. First if $l$ is an abstainer we build an equilibrium in the following way. There exists an equilibrium in the tree without $l$, add $l$ to this equilibrium by giving her her preferred option between abstaining and delegating to the guru of $p$. 

Now, we assume $l$ is not an abstainer. If $l\not \in \Acc(p)$ it is easy to construct an equilibrium. Again, there exists an equilibrium in the tree without $l$, add $l$ to this equilibrium by giving her her preferred option between voting and delegating to the guru of $p$. 

We now assume $l$ is not an abstainer and $l \in \Acc(p)$. In this case, we will assume that $p$ can always delegate to $l$ as last resort. Hence, we can consider that the acceptability set of $p$ can be restrained to the voters at least as preferred as $l$. Secondly, this assumption means that other voters cannot hope to have $p$ as guru. More precisely, the only gurus that they can reach through $p$ are the voters at least as preferred as $l$, and of course $l$.

To materialize these constraints, we consider the tree without $l$ and where $p$ takes the place of $l$ in all voters preference list (including the one of $p$). There exists an equilibrium in this tree. If $p$ delegates in the equilibrium (then she prefers her guru to $l$ otherwise $p$ would vote), add $l$ to this equilibrium by giving her her preferred option between voting and delegating to the guru of $p$. Otherwise, make $p$ delegate to $l$ and $l$ votes.  
\end{proof}

The other direction is also true, as shown in the following theorem.

\begin{theorem}
If $\SN$ is a social network such that for any preference profile there exists an equilibrium, then $\SN$ is a tree.
\end{theorem}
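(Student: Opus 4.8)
The plan is to prove the contrapositive: assuming $\SN$ is connected (as is standard for a social network — a disconnected acyclic graph is a forest, each of whose tree components already admits an equilibrium by Theorem~\ref{thrm:treeExistence}), I would show that if $\SN$ is not a tree then it contains a cycle, and exhibit a preference profile with no equilibrium. Since a connected non-tree graph has a shortest cycle, which is necessarily chordless, I may fix an \emph{induced} cycle $C=(v_1,\dots,v_k)$ with $k\ge 3$; being induced guarantees that the only neighbours of $v_i$ lying on $C$ are $v_{i-1}$ and $v_{i+1}$ (indices mod $k$).

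Next I would design the profile. Every voter not on $C$ is made an abstainer accepting nobody ($\Acc=\emptyset$): such a voter abstains in every equilibrium, is never a guru, and hence offers only the guru $0$ to her neighbours, so she can be ignored when analysing the gurus reachable on $C$. On the cycle I set all voters to be non-abstainers with $\Acc(v_1)=\{v_2\}$, $\Acc(v_2)=\{v_3\}$, and $\Acc(v_j)=\{v_1\}$ for $3\le j\le k$, each $v_i$ preferring her unique acceptable guru, then voting herself, then any other outcome. The point of this asymmetric choice is that the naive orientation $v_i\to v_{i+1}$ fails on even cycles (alternate voting is then an equilibrium); instead I route the whole arc $v_3,\dots,v_k$ towards $v_1$ (note $v_k$ is adjacent to $v_1$), collapsing the cycle into a length-$3$ conflict among $v_1,v_2,v_3$ regardless of the parity of $k$.

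I would then verify nonexistence. In any equilibrium each non-abstainer $v_i$ has $\GuruOf(v_i,d)\in\{v_i,g_i\}$, where $g_i$ is her unique acceptable guru, since she strictly prefers voting to every unacceptable guru and to abstaining. Writing $a_i$ for the event ``$v_i$ votes'', and using that a nonzero guru is always a voting vertex, the key facts are: (i) $a_1=\lnot a_2$, because $v_1$ is adjacent to $v_2$, so $v_1$ can reach guru $v_2$ exactly when $v_2$ votes; (ii) $a_2=\lnot a_3$, for the same reason applied to $v_2,v_3$; and (iii) $a_3=\lnot a_1$, since guru $v_1$ can exist only if $v_1$ votes, and a short downward induction from $v_k$ (adjacent to $v_1$) along $v_k\!\to\!v_{k-1}\!\to\!\cdots\!\to\!v_3$ shows that $v_1$ voting forces $\GuruOf(v_j,d)=v_1$ for all $j\ge 3$, whereas $v_1$ not voting makes guru $v_1$ unreachable. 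Combining the three, $a_1=\lnot a_2=a_3=\lnot a_1$, a contradiction, so no equilibrium exists.

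The main obstacle is getting a single gadget to defeat cycles of every length, and in particular even cycles, where the symmetric directed-cycle profile admits an alternating equilibrium; the asymmetric ``all of $v_3,\dots,v_k$ want $v_1$'' construction is precisely what reduces an arbitrary cycle to the $3$-cycle obstruction of Example~\ref{ex1}. The remaining care is bookkeeping: using an induced cycle to control the neighbourhoods $\NeighbSet(v_i)$, and padding the off-cycle voters as abstainers accepting nobody, so that transitivity of delegations and the absent chords cannot open an escape route to an unintended guru.
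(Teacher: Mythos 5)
Your proof is correct, and it shares the paper's overall strategy: argue by contraposition, fix a chordless cycle $C$, make the off-cycle voters irrelevant, and install on $C$ a preference profile admitting no equilibrium. The gadget itself, however, is genuinely different. The paper uses \emph{dense} acceptability: each cycle voter $i_j$ accepts every other cycle voter except her predecessor $i_{j-1}$, ranked cyclically ($i_{j+1} \succ_{i_j} i_{j+2} \succ_{i_j} \dots \succ_{i_j} i_{j-2} \succ_{i_j} i_j$); non-existence is then shown by assuming some cycle voter $i_1$ votes and chasing delegations around the cycle ($i_2$ must delegate to $i_3$, \dots, $i_{k}$ delegates to $i_1$), which makes $i_1$ the guru of everyone, including $i_2$ who rejects her. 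Your gadget is \emph{sparse}: every cycle voter has a singleton acceptability set, with $v_1 \rightarrow v_2 \rightarrow v_3$ and all of $v_3,\dots,v_k$ funneled to $v_1$, and nonexistence follows from the three biconditionals $a_1 = \lnot a_2$, $a_2 = \lnot a_3$, $a_3 = \lnot a_1$, a parity contradiction requiring no WLOG or delegation-chasing. Both constructions are engineered to kill the alternating equilibrium that defeats the naive directed-cycle profile on even cycles—the paper via dense acceptability (so alternate voters still accept one another), you via the funnel. Your version buys two small extras: the verification is case-free, and it keeps $\mathtt{maxa}=1$, so it incidentally shows that non-tree networks fail to guarantee equilibria even when every voter accepts at most one guru, which is relevant to the paper's closing discussion of the parameter $\mathtt{maxa}$ in Section~\ref{subsec:sparse}; you also make explicit the connectivity caveat (a disconnected forest always admits equilibria by Theorem~\ref{thrm:treeExistence}, so the statement implicitly assumes $\SN$ connected), which the paper's proof glosses over. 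The paper's version, in exchange, is rotation-symmetric and needs only one profile description rather than an asymmetric one.
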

\begin{proof}
Suppose that $\SN$ is not a tree, and let us consider a chordless cycle $C=(i_1,i_2,\dots,i_k,i_1)$ (with $k\geq 3$). We consider the following preferences, with no abstainer:
\begin{itemize}
    \item A voter not in $C$ prefers to vote.
    \item Voter $i_j\in C$ has the following preference: $\Acc(i_j)$ consists of all the voters of $C$ up to $i_{j-1}$ (and $i_j$), and we have: $i_{j+1} \succ_{i_j} i_{j+2}    \dots \succ_{i_j} i_{j-3}  \succ_{i_j} i_{j-2} \succ_{i_j} i_j$.
\end{itemize}
Suppose that there is an equilibrium. Each voter not in $C$ votes, and no voter in $C$ delegates to a voter not in $C$. We focus on voters in $C$. Nobody abstains since there is no abstainer. If nobody votes, then there is a delegation cycle, so a subset of voters abstain, which is impossible in an equilibrium (they would rather vote). So let us consider one voter in $C$, say $i_1$, that declares intention to vote. Voter $i_2$ does not vote (otherwise $i_1$ would delegate to $i_2$), does not delegate to $i_1$, so she delegates to $i_3$. Similarly $i_3$ does not vote (otherwise $i_1$ would delegate to $i_2$ so that $i_3$ would be her guru), and does not delegate to $i_2$, so she delegates to $i_4$. By an easy recurrence, we get that $i_j$ delegates to $i_{j+1}$ for $j=2,\dots,k-1$. Now since $i_1$ votes, $i_k$ delegates to her. But in this case $i_1$ would be the guru of everyone, including $i_2$ who does not approve her, so $i_2$ would vote, a contradiction.
\end{proof}

\subsection{Solving optimization problems in trees}

Let us address the complexity of the problems \textbf{MEMB}, \textbf{MINDIS}, \textbf{MINMAXVP} and \textbf{MINABST} in a tree. It will be shown that when the SN is a tree, a dynamic programming approach is successful in building and optimizing equilibria.

We introduce some additional tools. Let the social network $\SN$ be a tree $\Tree$. Assume that $\Tree$ is rooted at some vertex $r_0$, and let us define accordingly, for every vertex $i \in \VoterSet$, $\parent(i)$ the parent of $i$ and $\Child(i)$ the set of its children. Let $\Subtree{i}$ denote the subtree of $\Tree$ rooted at $i$.

Let $r \!\in\! \VoterSet$ and let $d$ be an equilibrium such that the guru of $r$ is some $g_r \!\in\! \VoterSet\! \cup\!\{0\}$. Note that $g_r$ can be in $\Tr \!\OrAbst$: then $r$ delegates downwards, votes or abstains, i.e. $d(r) \!\in\! \Tr\! \cup\! \{0\}$;  or $g_r \!\notin\! \Tr\! \OrAbst$: then $r$ delegates to her parent, i.e., $d(r)\!=\!p(r)$. Consider the chain of delegations starting from a voter $j \in \Trsansr$: either it reaches $r$, and then the guru of $j$ will be $g_r$, or it does not reach $r$, hence it is fully determined by the restriction of $d$ to the subtree $\Tr$.
Using this remark we will show that an equilibrium can be built inductively by combining delegation functions in subtrees.

Let us now define local equilibria to formalize restrictions of equilibria in a subtree.
Let $d : \Tr \longrightarrow \VoterSet \OrAbst$ be a delegation function over $\Tr$.
We define gurus associated with $d$ in a similar way as we defined gurus for delegation function over all voters.
Let the guru $\GuruOf(r,d)$ of $r$ be: the first voter $i$ such that $d(i)=i$ reached by the chain of delegations starting from $r$ if $r$ delegates downwards; $0$ if $d(r)=0$; or some $g_r \notin \Tr \OrAbst$ if $d(r) = p(r)$.
Given a voter $j \in \Trsansr$, let the guru $\GuruOf(j,d)$ of $j$ be: the first voter $i$ such that $d(i)=i$ reached by the chain of delegations starting from $j$ if the chain does not reach $r$; $0$ if $d(j)=0$; or $\GuruOf(i,d) = \GuruOf(r,d)$ otherwise.

Given $d : \Tr \longrightarrow \VoterSet \OrAbst$ and $g_r \in \VoterSet \OrAbst$, we say that $d$ is a \emph{local equilibrium on $\Tr$ with label $g_r$} if it satisfies : 
\begin{itemize}
    \item[(i)] either $g_r \!\notin \Tr \!\OrAbst$ and $d(r)\! =\! p(r)$, or there is a chain of delegations in $d$ going from $r$ to $g_r$.
    \item[(ii)] for every $i \!\in\! \Tr \!\setminus\! \{r\}$, voter $i$ does not want to change her delegation, given that the guru of $r$ is $g_r$, i.e., with $\GuruOf(r,d) := g_r$ it holds that $\GuruOf(i,d) \succ_i g\ \forall g \in (\cup_{j \in \NeighbSet(i)} \GuruOf(j,d) \cup \{0,i\}) \setminus \{ \GuruOf(i,d) \}$.
    \item[(iii)] the root $r$ does not want to change her delegation to any of her children, or to vote or to abstain, given that her current guru is $g_r$, i.e., $g_r \succ_r g\ \forall g \in (\cup_{j \in \NeighbSet(i), j \not= p(r)} \GuruOf(j,d) \cup \{0,r\}) \setminus \{ g_r \}$.
\end{itemize}
Note that condition (i) means that the label is consistent with the delegations, i.e., it is indeed possible that $g_r$ is the guru of $r$ in an equilibrium that coincides with $d$ on $\Tr$. Condition (ii) corresponds to Nash-stability for voters in $\Trsansr$, and condition (iii) is a relaxed Nash-stability for $r$. 
This definition slightly generalizes the definition of equilibrium: an equilibrium $d$ is exactly a local equilibrium on $\Tree = \Subtree{r_0}$ with label $\GuruOf(r_0,d)$.

%\MessageFromAdele{Voir pour les notations.}
% \MessageFromAdele{autre notation possible pour (ii): $\GuruOf(i,d) \succ_i g\ \forall g \in (\ \GuruOf( N(i),d) \cup \{0,i\}) \setminus \{ \GuruOf(i,d) \}$. Voire: $\GuruOf(i,d) \succeq_i g\quad \forall g \in  \GuruOf( N(i),d) \cup \{0,i\}$ si on a def $\succeq$}
% \MessageFromAdele{idem pour (iii): $\GuruOf(r,d) \succeq_r g\quad \forall g \in  \GuruOf( N(r) \cap \Tr,d) \cup \{0,i\}$ ?}

\begin{proposition}
\label{prop:dynprogtree}
Let $r \in \VoterSet$, $g_r \in \VoterSet \cup\{0\}$ and $d : \Tr \longrightarrow \VoterSet \cup\{0\}$.

Then $d$ is a local equilibrium on $\Tr$ with label $g_r$ if and only if the following assertions are satisfied:
\begin{itemize}
    \item[(a)] $d(r) = r$ (resp. $0$, $p(r)$) if $g_r = r$ (resp. $g_r= 0$, $g_r \notin \Tr \OrAbst$) and $d(r) = u^*$ if $g_r \in \Subtree{u^*}$ for some $u^* \in \Child(r)$;
    \item[(b)] $g_r \succ_r g$ for every $g \in \{ 0, r\} \setminus \{g_r\}$;
    \item[(c)] For every $u \in \Child(r)$, (c1) or (c2) is satisfied:
    \begin{itemize}
    \item[(c1)] $g_r \notin \Tu$ and there exists $g_u \in \Tu \cup \{0\}$ such that $g_r \succ_r g_u$, $g_u \succ_u g_r$, and $d$ is a local equilibrium on $\Subtree{u}$ with label $g_u$;
    \item[(c2)] $d$ is a local equilibrium on $\Subtree{u}$ with label $g_r$.
    \end{itemize}
\end{itemize}
\end{proposition}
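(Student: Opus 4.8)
The plan is to prove both implications by relating $d$ on $\Tr$ to its restrictions on the subtrees $\Subtree{u}$ for $u \in \Child(r)$, and the crux is a single structural observation. Because $\SN$ is a tree, the only way a delegation chain can leave $\Subtree{u}$ is through the edge $\{u,\parent(u)\}=\{u,r\}$. Hence for every $j\in\Tu$ the guru $\GuruOf(j,d)$ computed in $\Tr$ equals the guru computed in $\Subtree{u}$ once we label $\Subtree{u}$ by $\GuruOf(u,d)$: this label is either $g_r$ (when $u$'s chain reaches $r$) or some $g_u\in\Tu\OrAbst$ (when it stays inside $\Tu$). I would record this as a preliminary remark, together with the convention that a voter abstains rather than delegating to an abstainer, so that the configuration ``$d(u)=r$ with $g_r=0$'' is replaced by ``$d(u)=0$''. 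This consistency is exactly what lets conditions (ii) and (iii) be verified subtree by subtree.

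\emph{Forward direction.} Assuming $d$ is a local equilibrium on $\Tr$ with label $g_r$, item (a) is immediate from condition (i) of the definition together with the fact that a chain from $r$ reaching a target in $\Subtree{u^*}$ must begin with $d(r)=u^*$, and item (b) is just condition (iii) restricted to the targets $g\in\{0,r\}$. For item (c) I would fix a child $u$ and set $g_u:=\GuruOf(u,d)$. If $g_u=g_r$ I would derive (c2): the restriction of $d$ to $\Subtree{u}$ is a local equilibrium with label $g_r$, where condition (i) follows from the chain analysis, condition (ii) is inherited verbatim, and condition (iii) for $u$ is condition (ii) for $u$ in $\Tr$ with the neighbor $\parent(u)=r$ (whose guru is precisely the excluded $g_r$) removed. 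If $g_u\neq g_r$ then $u$'s chain does not reach $r$, so $g_u\in\Tu\OrAbst$ and $g_r\notin\Tu$, and I would derive (c1): $g_r\succ_r g_u$ from condition (iii) applied to the guru $g_u$ of the neighbor $u$, and $g_u\succ_u g_r$ from condition (ii) for $u$, which forbids $u$ from improving by delegating to $r$ and thereby getting $g_r$; the subtree local equilibrium with label $g_u$ then comes from the same inheritance argument.

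\emph{Backward direction.} Assuming (a), (b), (c), I would verify (i), (ii), (iii). Condition (i) follows from (a): if $g_r\in\Subtree{u^*}$ then (c) can hold only via (c2) for $u^*$ (since (c1) demands $g_r\notin\Subtree{u^*}$), and the chain from $u^*$ to $g_r$ provided by that subtree's condition (i) prepends the step $d(r)=u^*$ into a chain from $r$ to $g_r$. Condition (iii) follows from (b) for the targets $0,r$ and from (c) for the children, since a child $u$ contributes the guru $\GuruOf(u,d)$, which is $g_r$ in case (c2) (hence excluded) and is rejected by $g_r\succ_r g_u$ in case (c1). Finally I would check condition (ii) for each $i\in\Trsansr$, which lies in a unique $\Tu$: if $i$ is a proper descendant of $u$, the consistency remark reduces its stability in $\Tr$ to condition (ii) of the local equilibrium on $\Subtree{u}$; if $i=u$, I would combine condition (iii) of that subtree (covering deviations to $u$'s children and to voting or abstaining) with the single remaining deviation, namely delegating to $\parent(u)=r$ and obtaining guru $g_r$, which is blocked by $g_u\succ_u g_r$ in case (c1) and is no improvement in case (c2) since $g_r$ is already $u$'s guru.

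The step I expect to be the main obstacle is making the guru-consistency observation fully rigorous across the subtree boundary --- in particular handling delegation chains that climb to $u$ and then exit through $r$, and the $g_r=0$ corner case via the abstention convention --- because every verification of conditions (ii) and (iii) silently relies on the gurus of a voter's neighbors being the same whether computed in $\Subtree{u}$ or in $\Tr$. The remaining effort is the bookkeeping of deciding, for each child, which of (c1)/(c2) is active, and checking that the two preference inequalities in (c1) are exactly the mutual non-deviation conditions across the edge $ur$.
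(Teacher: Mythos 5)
Your proof is correct and takes essentially the same route as the paper's: in the forward direction you set $g_u = \GuruOf(u,d)$ and split into the cases $g_u = g_r$ (yielding (c2)) and $g_u \neq g_r$ (yielding (c1) from the mutual non-deviation conditions across the edge $(u,r)$), and in the backward direction you verify (i)--(iii) by handling each child $u$ through the subtree's condition (iii) plus the single extra deviation of delegating to $r$, exactly as the paper does. The guru-consistency observation you isolate as the crux is precisely the remark the paper states just before defining local equilibria (a chain from $j \in \Tu$ either reaches $r$, inheriting label $g_r$, or stays inside $\Tu$), so your treatment matches the paper's, including its somewhat glossed handling of the $g_r = 0$ corner case, which your abstention convention patches explicitly.
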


\begin{proof}
Let $d$ be a local equilibrium on $\Tr$ with label $g_r$. It is clear that (i) implies (a), and (iii) implies (b).
Let us check (c) for a given $u \in \Child(r)$. Assume first $\GuruOf(u,d) = g_r$. Then $d$ is a local equilibrium on $\Tu$ with label $g_r$ and (c2) is satisfied.
Assume now that $\GuruOf(u,d) \not = g_r$, hence $\GuruOf(u,d)$ is some $g_u \in \Tu \OrAbst$ with $g_u \not= g_r$. It comes $g_r \notin \Tu$ since otherwise $u$ would also have $g_r$ as guru because the only path from $r$ to $g_r$ goes through $u$. Also $d$ is a local equilibrium on $\Tu$ with label $g_u$. Finally since $d$ satisfies (ii) and (iii), neither $r$ nor $u$ wants to change for the guru of each other, i.e., $g_r \succ_r g_u$ and $g_u \succ_u g_r$. Hence (c1) is satisfied. 

Conversely, let us assume that $d$ satisfies (a), (b), and (c). First condition (i) is satisfied. Indeed it is implied by (a) when $g_r \notin \Tr$ or $g_r = 0$ or $g_r = r$. Otherwise $g_r \in \Tu$ for some $u \in \Child(r)$, and since $d$ is a local equilibrium on $\Tu$ with label $g_r$, there exists a chain of delegations from $u$ to $g_r$ in $d$. By (a) $d(r)=u$ hence there exists a chain of delegations from $r$ to $g_r$. Condition (ii) is satisfied for every $j \in \Tr \setminus \Child(r)$ because $d$ is a local equilibrium in every $\Tu$. Let us check (ii) for $u \in \Child(r)$: it is sufficient to check that $\GuruOf(u,d) \succ_u \GuruOf(r,d) = g_r$ whenever $\GuruOf(u,d) \not= g_r$. This holds since $g_u \succ_u g_r$ in case (c1), and $\GuruOf(u,d) = g_r$ in case (c2). Finally let us prove that condition (iii) is satisfied. With (b) it is sufficient to check that $r$ prefers $g_r$ to any of the gurus of her children, and indeed for every $u \in \Child(r)$ by (c) it comes: either the guru of $u$ is $g_r$, or it is some $g_u$ such that $g_r \succ_r g_u$.
\end{proof}

To solve optimization problems on trees, we now present a dynamic programming approach. It is presented in a general context where the function to minimize over equilibria is a function $\Phi$ with appropriate properties. Then it is shown that criteria of the problems \textbf{MEMB}, \textbf{MINABST} and \textbf{MINDIS} can be written under that form, thus allowing us to conclude on the complexity of those three problems.
For every $i \in \VoterSet$, let $\Phi_i : g \in \VoterSet \OrAbst \longrightarrow \mathbb{R}$ be a function, and let $\Phi$ be a function defined on equilibria by $\Phi(d) = \sum_{i \in \VoterSet} \Phi_i(\GuruOf(i,d))$.
Consider the problem \textbf{MINPHI} of finding an equilibrium $d$ that minimizes $\Phi(d)$. Using Proposition~\ref{prop:dynprogtree}, one can show that this problem can be solved in polynomial time through dynamic programming.

\begin{theorem}
\label{thm:minphi}
On a tree SN, the problem \textbf{MINPHI} can be solved in $O(n^3)$. 
\end{theorem}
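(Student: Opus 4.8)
The plan is to turn the recursive characterization of Proposition~\ref{prop:dynprogtree} into a bottom-up dynamic program over the rooted tree $\Tree$. For every vertex $r$ and every candidate label $g_r \in \VoterSet \OrAbst$, I would compute
\[ M_r(g_r) = \min\Big\{ \sum_{i \in \Tr} \Phi_i(\GuruOf(i,d)) : d \text{ is a local equilibrium on } \Tr \text{ with label } g_r \Big\}, \]
with the convention $M_r(g_r) = +\infty$ when no such local equilibrium exists. The structural fact I would exploit is that, once the label (i.e.\ the guru) of each node is fixed along the recursion, the guru of every voter in a subtree is entirely determined by the local equilibrium restricted to that subtree; in particular the objective is additively separable, $\sum_{i \in \Tr}\Phi_i(\GuruOf(i,d)) = \Phi_r(g_r) + \sum_{u \in \Child(r)} \sum_{i \in \Tu}\Phi_i(\GuruOf(i,d))$, and each inner sum is exactly the quantity optimized by $M_u(\cdot)$ at the label carried by $u$. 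Since $\Subtree{r_0} = \Tree$ has no external labels, the optimum of \textbf{MINPHI} is then read off as $\min_{g \in \VoterSet \OrAbst} M_{r_0}(g)$.

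Second, I would derive the recurrence directly from Proposition~\ref{prop:dynprogtree}. Conditions (a)–(b) let me discard infeasible labels up front: a label is admissible only if $g_r = r$ with $r \notin \AbstSet$, or $g_r = 0$ with $r \in \AbstSet$, or $g_r \in \Acc(r)$ (which covers both $g_r \in \Tr \setminus \{r\}$ and the external case $g_r \notin \Tr \OrAbst$). For an admissible label I set $M_r(g_r) = \Phi_r(g_r) + \sum_{u \in \Child(r)} C_u(g_r)$, where the per-child term realizes the dichotomy of condition (c): if $g_r \in \Tu$, only (c2) applies, so $C_u(g_r) = M_u(g_r)$ (and this child is the unique $u^*$ with $d(r)=u^*$); otherwise
\[ C_u(g_r) = \min\Big( M_u(g_r),\ \min\{\, M_u(g_u) : g_u \in \Tu \cup \{0\},\ g_r \succ_r g_u,\ g_u \succ_u g_r \,\} \Big), \]
the first argument being option (c2) (the child delegates upward to $r$ and inherits $g_r$) and the second being option (c1). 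Leaves form the base case, $M_r(g_r) = \Phi_r(g_r)$ for every admissible $g_r$ and $+\infty$ otherwise. Correctness is then exactly the equivalence of Proposition~\ref{prop:dynprogtree} combined with the separability observation; the only manual check is that the ``unique child containing $g_r$'' bookkeeping of (a) is consistent, which holds because $g_r$ lies in at most one child subtree, and that $+\infty$ propagates correctly through the sum.

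Finally I would bound the running time. There are $n$ vertices and $n+1$ labels, hence $O(n^2)$ table entries. Computing one entry $M_r(g_r)$ costs, per child $u$, an $O(1)$ lookup for the (c2) term plus an $O(|\Tu|)$ scan over candidate labels $g_u \in \Tu \cup \{0\}$ for the (c1) term (each comparison $g_r \succ_r g_u$ and $g_u \succ_u g_r$ being $O(1)$ after preprocessing ranks). As the children subtrees are disjoint, one entry costs $O\big(\sum_{u \in \Child(r)} |\Tu|\big) = O(n)$, so the total is $O(n^3)$. The point I would be most careful about, and the main obstacle, is precisely the additive-separability argument: a voter buried deep in $\Tu$ whose delegation chain climbs all the way to $r$ and possibly to an external $g_r$ must have its contribution $\Phi_j(\GuruOf(j,d))$ counted once and with the correct guru. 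This is exactly what indexing $M_u$ by the inherited label guarantees, and it is the reason the table must range over all labels (including those external to $\Tr$) rather than collapsing them into a single ``delegates upward'' case.
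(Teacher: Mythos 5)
Your proof is correct and takes essentially the same route as the paper's: your table $M_r(g_r)$ is exactly the paper's $V(r,g_r)$, your per-child cost $C_u(g_r)$ coincides with the paper's $\min_{g_u \in \Label_u(g_r)} V(u,g_u)$ (the split on whether $g_r \in \Tu$ is just making explicit that (c1) cannot apply in that case), and the admissibility filter and the $O(n)$-per-entry, $O(n^3)$-total complexity analysis match the paper's argument. No gaps to report.
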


\begin{proof}
Given $r\! \in\! \VoterSet$ and $d$ a local equilibrium on subtree $\Tr$, we extend naturally the definition of $\Phi$ by setting $\Phi(d)\! =\! \sum_{i \in \Tr} \Phi_i(\GuruOf(i,d))$.
Let $V(r,g_r)$ be the minimum of $\Phi(d)$ for $d$ a local equilibrium on $\Tr$ with label $g_r$.
Let $\Label_u(g_r) \!=\! \{g_u\! \in\! \Tu \!\OrAbst \!\cup\! \{g_r\}$ such that $(c1)$ or $(c2) \}$. Using Prop.~\ref{prop:dynprogtree}, we establish the DP equation as follows. First $V(r,g_r) = +\infty$ if $g_r$ does not satisfies (b). Otherwise,
when computing $V(r, g_r)$, the guru of $r$ is fixed to $g_r$, thus making independent the subproblems on the children of $r$. The guru of each child $u$ of $r$ is either $g_r$ or inside $\Tu \!\OrAbst$ (more precisely in $\Label_u(g_r)$); by additivity, we just have to pick the best solution for each child and sum them up. Finally it comes
$$ V(r, g_r) = \Phi_r(g_r) + \sum_{u \in \Child(r)} \min_{g_u \in \Label_u(g_r)} V(u,g_u)$$
If $r$ is a leaf, then $V(r,g_r)$ equals $\infty$ if $g_r$ does not satisfy (b), else it equals $\Phi_r(g_r)$.
%\MessageFromAdele{DP a justifier ou pas?} \MessageFromBruno{Je dirais qu'une phrase explicative suffit, du genre: when computing $V(r, g_r)$, the gourou of $r$ is fixed to $g_r$. Fixing $g_r$ makes independent the subproblems on the children of $r$. The gourou of each child $u$ of $r$ is either $g_r$ or inside $\Tu \!\OrAbst$ (more precisely in $\Label_u(g_r)$); by additivity, we just have to pick the best solution for each child and sum them up.}
The optimal value of \textbf{MINPHI} is $\min_{g_r \in \VoterSet \OrAbst} V(r,g_r)$.
Using the DP equation, each $V(r,g_r)$ can be computed in $O(\sum_{u\in \Child(r)} |\Label_u(g_r)|)$.  We have $\sum_{u \in \Child(r)} |\Label_u(g_r)| \leq \sum_{u \in \Child(r)} (|\Tu| + 2) \leq 3|\Tr|$. Thus each $V(r,g_r)$ can be computed in $O(|\Tr|)$, and the problem \textbf{MINPHI} can be solved in $O(n^3)$.
\end{proof}

\begin{corollary}
The problems \textbf{MEMB}, \textbf{MINABST}, and \textbf{MINDIS} are solvable in $O(n^3)$.
\end{corollary}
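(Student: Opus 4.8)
The plan is to show that each of the three objectives can be written in the additive form $\Phi(d) = \sum_{i \in \VoterSet} \Phi_i(\GuruOf(i,d))$ required by Theorem~\ref{thm:minphi}, and then to invoke that theorem, which solves \textbf{MINPHI} in $O(n^3)$ on a tree. Since an equilibrium always exists when $\SN$ is a tree (Theorem~\ref{thrm:treeExistence}), the minimum is attained, so casting these problems as instances of \textbf{MINPHI} is well-defined. It thus suffices, for each problem, to exhibit suitable per-voter functions $\Phi_i : \VoterSet \OrAbst \longrightarrow \mathbb{R}$ that depend only on the guru of the voter.

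For \textbf{MINDIS} I would set $\Phi_i(g) = \Score(i,g) - 1$, the dissatisfaction incurred by voter $i$ when her guru is $g$, where $\Score(i,g)$ denotes the rank of $g$ in $\succ_i$. Then $\Phi(d) = \sum_{i \in \VoterSet}(\Score(i,\GuruOf(i,d))-1)$ is exactly the quantity to be minimized. For \textbf{MINABST} I would take $\Phi_i(g) = 1$ if $g = 0$ and $\Phi_i(g) = 0$ otherwise, so that $\Phi(d)$ counts precisely the abstaining voters. In both cases each $\Phi_i$ is evaluable in $O(1)$ after an $O(n)$ preprocessing of ranks, so Theorem~\ref{thm:minphi} applies directly and gives the $O(n^3)$ bound.

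The only one of the three that is not already a minimization is \textbf{MEMB}, which asks, for a fixed non-abstainer $i^\star$, whether some equilibrium $d$ satisfies $i^\star \in \Gurus(d)$, equivalently $\GuruOf(i^\star,d) = i^\star$. I would encode this as an instance of \textbf{MINPHI} through the indicator objective $\Phi_{i^\star}(g) = 0$ if $g = i^\star$ and $\Phi_{i^\star}(g) = 1$ otherwise, together with $\Phi_j \equiv 0$ for every $j \neq i^\star$. The optimal value of the corresponding \textbf{MINPHI} instance is then $0$ exactly when there is an equilibrium in which $i^\star$ is her own guru (hence a guru), and $1$ otherwise; solving \textbf{MINPHI} once and testing whether the optimum equals $0$ answers \textbf{MEMB} in $O(n^3)$.

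The one point requiring care, and the reason \textbf{MINMAXVP} is \emph{not} covered by this corollary, is that the additive shape is essential to Theorem~\ref{thm:minphi}: each objective above decomposes as a sum of per-voter contributions, each depending only on that voter's guru, which is precisely what makes the subproblems on the children of a node independent in the dynamic program. The maximum voting power of a guru is instead a maximum over gurus of a global count of the voters they represent, which does not have this separable form and must be handled by a different (and more costly) argument. For the three problems at hand no such obstacle arises, so the corollary follows immediately from Theorem~\ref{thm:minphi}.
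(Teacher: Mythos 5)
Your proposal is correct and follows essentially the same route as the paper: all three problems are cast as instances of \textbf{MINPHI} with per-voter functions depending only on the guru, and Theorem~\ref{thm:minphi} is invoked. Your choices differ only trivially from the paper's (rank minus one instead of rank for \textbf{MINDIS}, and a $0/1$ indicator instead of the paper's $-1/0$ indicator on the designated voter for \textbf{MEMB}), which changes the objective by a constant or a sign convention but not the argument.
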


\begin{proof}
Let us show that those three problems can be written as a \textbf{MINPHI} problem with appropriate function $\Phi$. The result will then follow from Theorem~ \ref{thm:minphi}.
Problem \textbf{MINABST} is \textbf{MINPHI} with $\Phi_i(g) = \mathbf{1}_{\{ g=0 \}}$.
Problem \textbf{MINDIS} is \textbf{MINPHI} with $\Phi_i(g)$ the rank of $g$ in $\succ_i$.
For problem \textbf{MEMB}, let $r_0$ be the voter given as input, and let $\Phi_{r_0}(r_0) = -1$, $\Phi_i(g)=0$ for every $i\not= r_0$ or $g \not= r_0$. Then the optimum of \textbf{MINPHI} is -1 iff there exists an equilibrium where $r_0$ is a guru.
\end{proof}

In contrast, the criterion of \textbf{MINMAXVP} cannot be written as a $\Phi$ function of that form; let us present a dedicated DP scheme for solving \textbf{MINMAXVP}.

\begin{theorem}
The problem \textbf{MINMAXVP} is solvable in $O(n^4)$.
\end{theorem}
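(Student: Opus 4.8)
The objective of \textbf{MINMAXVP} is a maximum of voting powers, and the voting power of a guru depends on a global count of the voters she represents; hence it does not fit the additive form $\Phi(d) = \sum_i \Phi_i(\GuruOf(i,d))$ required by Theorem~\ref{thm:minphi}. The plan is therefore to (i) turn the optimization into a sequence of feasibility tests by thresholding the objective, and (ii) enrich the dynamic-programming table of Proposition~\ref{prop:dynprogtree} with one extra coordinate that records a partial voting power. Concretely, for each integer $B \in \{1,\dots,n\}$ I would decide whether there is an equilibrium in which every guru represents at most $B$ voters; the answer to \textbf{MINMAXVP} is then the smallest feasible $B$. Since there are only $n$ candidate thresholds, it suffices to solve each feasibility test in $O(n^3)$ to obtain the claimed $O(n^4)$ bound.

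For a fixed bound $B$, I would define a boolean table $F(r,g_r,c)$ that is true iff there is a local equilibrium on $\Tr$ with label $g_r$ such that exactly $c$ voters of $\Tr$ have $g_r$ as their guru and, moreover, every guru already \emph{closed} inside $\Tr$ represents at most $B$ voters. Here a guru $g \in \Tr$ is closed as soon as no voter outside $\Tr$ can still be routed to it; the only guru that may still grow is the one labelling $r$, namely $g_r$ when $g_r \in \Tr$ (its current count $c$ is precisely its partial voting power, which the parent of $r$ may still increase by delegating to $r$). When $g_r \notin \Tr \OrAbst$, the integer $c$ instead counts the voters of $\Tr$ that delegate upward to the external guru $g_r$, a quantity the parent will need in order to tally that guru's voting power later. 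Counts exceeding $B$ can be discarded at once, since voting powers only increase as one climbs the tree, so $c$ ranges over $O(B) = O(n)$ values.

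The recurrence mirrors cases (a)--(c) of Proposition~\ref{prop:dynprogtree}, with the count coordinate handled by a convolution over the children of $r$. A child $u$ treated under case (c2) keeps the label $g_r$ and contributes its count $c_u$ additively to $r$'s partial count (so that $c = 1 + \sum_{u \in (c2)} c_u$ when $g_r \neq 0$), while a child treated under case (c1) contributes nothing to the count but \emph{closes} its own guru $g_u$; at that moment I would require $c_u \le B$ (when $g_u \neq 0$), together with the preference conditions $g_r \succ_r g_u$ and $g_u \succ_u g_r$. At the root $r_0$ the label $g_{r_0}$ is itself closed, so the instance is feasible for $B$ iff $F(r_0,g_{r_0},c)$ holds for some $g_{r_0}$ and some $c \le B$. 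Combining the children of a fixed vertex for a fixed label is an OR-convolution of boolean vectors of length $O(n)$, and the standard tree-knapsack telescoping argument bounds the total convolution work, over the whole tree and for a single label, by $O(n^2)$; multiplying by the $O(n)$ possible labels gives $O(n^3)$ per threshold, hence $O(n^4)$ in total.

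The main obstacle, and the point I would treat most carefully, is the bookkeeping of the count coordinate: one must argue that the partial count attached to a label equals the eventual voting power of the corresponding guru once it is closed, that every guru of an equilibrium is closed exactly once (at the top vertex of the connected set of voters it represents), and that no voting power is double-counted or forgotten when a subtree delegates upward to an external guru. Establishing this correspondence rigorously is what legitimizes the extra coordinate; the convolution and the complexity count are then routine given Proposition~\ref{prop:dynprogtree}.
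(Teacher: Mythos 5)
Your proposal follows the same skeleton as the paper's proof: threshold the objective over $B \in \{1,\dots,n\}$, and for each threshold run a dynamic program over the local equilibria of Proposition~\ref{prop:dynprogtree}, exploiting the key structural fact that in a local equilibrium on $\Tr$ with label $g_r$ every guru of $\Tr$ other than $g_r$ is already ``closed'' (voters outside $\Tr$ can only reach it through $r$), so only the label's partial count needs to travel upward. The difference is in how that count is stored. The paper keeps, for each pair $(r,g_r)$, only the \emph{minimum} achievable count $V^K(r,g_r)$ of voters of $\Tr$ whose guru is $g_r$, subject to all closed gurus having power at most $K$; this suffices because a smaller count is never worse, neither for the feasibility test $V^K(u,g_u)\le K$ when a child's guru is closed under (c1), nor for the final test $\min_{g_{r_0}}V^K(r_0,g_{r_0})\le K$ at the root. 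That monotonicity turns each node's computation into a plain sum of minima over children, $O(|\Tr|)$ per entry, hence $O(n^3)$ per threshold and $O(n^4)$ overall. You instead keep the full boolean set of achievable counts and combine children by OR-convolution, paying for it via the tree-knapsack telescoping bound ($O(n^2)$ per label and threshold). Both routes are correct and land on $O(n^4)$; the paper's min-count table is leaner and needs no convolution machinery, while yours never has to argue monotonicity in the count and would extend to constraints for which the whole set of achievable counts genuinely matters. One concrete flaw to fix in your version: the label $g_r = 0$ must be exempted from the count cap. Abstention is not a guru, so discarding counts exceeding $B$ ``at once,'' and demanding $c \le B$ at the root, would wrongly declare infeasible any equilibrium in which more than $B$ voters abstain; the paper avoids this by simply setting $V^K(r,0)=0$. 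You do make the analogous exception (requiring $c_u \le B$ only when $g_u \neq 0$) when closing a child under (c1), so the repair is a one-line extension of a convention you already adopted.
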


\begin{proof}
Let $r \in \VoterSet$ and $g_r \in \VoterSet \OrAbst$. We first note that in a local equilibrium on $\Tr$ with label $g_r$, the voting power of all gurus in $\Tr \setminus \{g_r\}$ is fully determined by the delegation function over $\Tr$: indeed, the voters outside of $\Tr$ may only reach them %guru $g_r$ 
through $r$.

%Given $K \leq n$ an integer, let us define $V^K(r,g_r)$ as the minimum of $|\{ i \in \Tr \ |\ \GuruOf(i,d)=g_r \}|$, subject to: $d$ is a local equilibrium on $\Tr$ with label $g_r$, and every guru in $\Tr \setminus \{g_r\}$ has voting power at most $K$. Note that there exists an equilibrium where the max voting power is $\leq K$ if and only if $\min_{g_{r_0} \in \VoterSet \OrAbst} V^K(r_0,g_{r_0}) \leq K$.

Given $K \leq n$ an integer and $g_r$ in $\VoterSet$, let us define $V^K(r,g_r)$ as the minimum of $|\{ i \in \Tr \ |\ \GuruOf(i,d)=g_r \}|$, subject to: $d$ is a local equilibrium on $\Tr$ with label $g_r$, and every guru in $\Tr \setminus \{g_r\}$ has voting power at most $K$. 
Furthermore, since the abstention rate does not matter in \textbf{MINMAXVP}, we set $V^K(r,0)=0$ (or infinity if there is no local equilibrium where every guru has voting power at most $K$). 
Note that there exists an equilibrium where the max voting power is $\leq K$ if and only if $\min_{g_{r_0} \in \VoterSet \OrAbst} V^K(r_0,g_{r_0}) \leq K$.

Let us fix $K$ and show that all values $V^K(r,g_r)$ can be computed through dynamic programming, thanks to Proposition~\ref{prop:dynprogtree}, as follows. 

\begin{itemize}
    \item If $g_r$ does not satisfy (b) then $V^K(r,g_r) = +\infty$.
    \item Otherwise, for every $u \in \Child(r)$ with a label $g_u \in \Label_u(g_r)$, one can see that the subtree $\Tu$ contributes to the objective function of $V^K(r,g_r)$ in the following manner. If $g_u = g_r$ then the subtree $\Tu$ yields at least $V^K(u,g_u)$ units of voting power to $g_r$; if $g_u \not= g_r$ then it is necessary that $V^K(u,g_u) \leq K$ (otherwise $u$ would have too much voting power), and the subtree $\Tu$ does not contribute to any increase of the voting power of $g_r$. This leads to the DP equation:
$$ V^K(r, g_r) = \mathbf{1}_{g_r \neq 0} + \sum_{u \in \Child(r)} \min_{g_u \in \Label_u(g_r)} \begin{cases}
      V^K(u,g_u) & \text{if}\ g_u = g_r \\
      0 & \text{if}\ g_u \not= g_r\text{ and }V^K(u,g_u)\leq K \\
      +\infty & \text{otherwise} \\
    \end{cases}
$$
\end{itemize}
If $r$ is a leaf, then $V^K(r,g_r)$ equals $\infty$ if $g_r$ does not satisfy (b), else it equals 0 if $g_r=0$ and 1 otherwise.

Thus each $V^K(r,g_r)$ can be computed in $O(\sum_{u \in \Child(r)}|\Label_u(g_r)|) = O(|\Tr|)$. The complete table $V^K(\cdot, \cdot)$ can be computed in $O(n^3)$. Finally the problem \textbf{MINMAXVP} can be solved in $O(n^4)$.
\end{proof}

\paragraph{The Subcase of Star Social Networks.} Interestingly, in the case of a star SN (which we assume rooted in its center $c$), our dynamic programming approaches can be used to solve problems \textbf{MEMB}, \textbf{MINABST}, \textbf{MINDIS} and \textbf{MINMAXVP} in $O(n^2)$. Indeed, note that for any leaf $l$ of the tree, any value $V(l,g_l)$ can be computed in constant time. Hence, in the case of star SNs,
%instead of computing $O(n^2)$ values in $O(n)$ operations,
we only have to compute $O(n)$ values in $O(n)$ operations (the values $V(c,g_c)$, where $c$ is the center of the star), and the other values can be computed in constant time. Hence the $O(n^2)$ complexity to solve problems \textbf{MEMB}, \textbf{MINABST} and \textbf{MINDIS}. For problem \textbf{MINMAXVP}, observe that the same argument holds to compute all values $V^K(r,g_r)$ in $O(n^2)$ operations for any $K\le n$. We conclude the justification of our claim by noting that in the case of a star rooted in its center, it is sufficient to compute these values for $K\le 1$  because it will always hold that a guru in $\mathcal{T}_r\setminus\{g_r\}$ has voting power at most 1.

\section{Convergence of delegation dynamics}\label{sec:conv}
In this section, we investigate the convergence of delegation dynamics in several types of graphs. While complete SNs are investigated in Section \ref{sec:convComplete}, our results on two subclasses of tree SNs, paths and stars, are presented in Section \ref{sec:convPathAndStar}. 
%Furthermore we tackle convergence issues, mainly showing that convergence is not guaranteed even in instances where there is an equilibrium (Proposition~\ref{prop:dynamics}).

\subsection{Convergence of delegation dynamics in complete social networks}\label{sec:convComplete}
%Now, let us deal with convergence issues. 
As an equilibrium may not exist in the case of a complete SN (cf. Example \ref{ex1}), it is obvious that IRD or BRD do not always converge. One may wonder whether the convergence is guaranteed in instances where an equilibrium exists.  
Note that this would not contradict NP-completeness of problem \textbf{EX} in complete SNs (Theorem \ref{theo:equivKernelNashStable}) since the convergence might need an exponential number of steps. 
Unfortunately, the answer is negative.

\begin{theorem}\label{thrm:dynamics}
There exists an instance $(P,\SN)$, where $\SN$ is a complete SN and where an equilibrium exists, and a BRD that do not converge. This even holds with a permutation dynamics with a starting delegation function $d_{0}$ in which every voter declares intention to vote.
\end{theorem}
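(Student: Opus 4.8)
The plan is to exhibit one explicit four-voter instance on the complete graph $K_4$, to prove it admits an equilibrium by the kernel characterization (Proposition~\ref{prop:equivGurusKernel}), and then to trace a permutation best-response dynamics from the all-vote start and show it falls into a periodic orbit. I would take $\VoterSet=\{1,2,3,4\}$, $\SN=K_4$, no abstainers, and ``directed-cycle'' preferences: each voter most prefers to delegate to her cyclic successor ($1$ accepts $2$, $2$ accepts $3$, $3$ accepts $4$, $4$ accepts $1$), then prefers to vote, then the two remaining voters, then to abstain. Thus $\Acc(i)$ is the singleton successor of $i$ for every $i$, and the delegation-acceptability digraph $\GP$ is exactly the directed $4$-cycle $1\to2\to3\to4\to1$.

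For existence I would invoke Proposition~\ref{prop:equivGurusKernel}. The set $K=\{1,3\}$ is independent in $\GP$ (no arc joins $1$ and $3$) and absorbing (the arcs $2\to3$ and $4\to1$ absorb the two outside vertices), so it is a kernel containing no abstainer; hence there is an equilibrium with $\Gurus(d)=\{1,3\}$, namely $1$ and $3$ vote while $2$ delegates to $3$ and $4$ delegates to $1$. This makes the instance a genuine positive instance and the theorem non-vacuous.

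For non-convergence I would run the dynamics with $\sigma=(1,2,3,4)$ and $d_0$ the all-vote function, tracking the vector $(d(1),d(2),d(3),d(4))$ and the induced gurus. The mechanism to verify is twofold. First, because each voter's only acceptable guru is her successor, whenever that successor is offered as a voting guru she grabs it, which lets the ``voting'' position travel around the cycle. Second, two delicate moves keep the oscillation alive: (a) when the delegations already form a chain $i\to i{+}1\to i{+}2\to i{+}3$, the voter who would like to close it cannot, since that would create a full delegation cycle, which is not an improving move, so she keeps voting; and (b) when a voter's guru is two steps ahead in the cycle (a vertex she ranks \emph{below} voting), she cannot reach her successor and therefore retreats to voting rather than holding that bad guru, precisely because each voter ranks voting above every non-successor. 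It is this systematic fall-back to \emph{voting}, rather than to any stable sink, that prevents convergence. Tracing these moves, the run reaches the state $(2,3,4,4)$ (all gurus equal to $4$) and, after twelve further token passes that rotate the voting vertex around the cycle, returns to $(2,3,4,4)$ with the token again on voter $4$; since both the state and the token-phase recur, the run is eventually periodic and never converges.

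The main obstacle, and the part demanding the most care, is reconciling the two requirements: the instance must \emph{have} an equilibrium yet the chosen run must \emph{avoid} it forever. These pull in opposite directions, since stabilizing cyclic voters forces one to give them an acceptable external guru, and any such guru available during the run would typically let a voter escape and trigger convergence. The construction resolves this by letting each voter's sole acceptable guru be her successor: the equilibrium exists only because the \emph{even} cycle admits a kernel (pairing alternate vertices), while along the all-vote run the stabilizing gurus are never simultaneously available in the configuration that a blocked voter could exploit, so the blocked voter always retreats to voting. Concretely the remaining work is (i) to confirm that each listed move is a true best response, in particular that closing the delegation cycle is disallowed and that retreating to voting beats keeping a below-voting guru, and (ii) to check that the twelve-step orbit closes with the correct token alignment, so that the recurrence is exact and no equilibrium state is ever visited.
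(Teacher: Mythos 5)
Your proposal is correct and follows essentially the same strategy as the paper's proof: exhibit an explicit four-voter instance on the complete graph that admits an equilibrium (gurus forming a kernel of the acceptability digraph), then trace a permutation best-response dynamics from the all-vote start and show it enters a periodic orbit of twelve steps with matching token phase. Your cyclic-symmetric preference profile differs in detail from the paper's instance (where voters $1,3$ and $2,4$ share preferences), and you verify existence via Proposition~\ref{prop:equivGurusKernel} rather than by direct inspection, but these are cosmetic variations; I checked your trace and every move is indeed the unique best response, with the state $(2,3,4,4)$ recurring after twelve steps.
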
 
%\MessageFromHugo{WE DO NOT PROVE HERE THAT THE INSTANCE HAS AN EQUILIBRIUM}
\proof
We consider the following instance with 4 %ordered 
voters %$\{ 1,2,3,4\}$ 
and the following preferences:
\begin{align*}
1&: 2 \succ_1 1 \succ_1 3 \succ_1 4 \succ_1 0 \\ 
2&: 3 \succ_2 4 \succ_2 2 \succ_2 1 \succ_2 0 \\ 
3&: 2 \succ_3 1 \succ_3 3 \succ_3 4 \succ_3 0\\
4&: 3 \succ_4 4 \succ_4 2 \succ_4 1 \succ_4 0
\end{align*}
The social network is the complete graph. Note that this instance admits the following Nash-stable delegation function: \(d(1)=1, d(2)=4, d(3)=1\) and \(d(4)=4\). 

We consider the BRD permutation dynamics with function $T$ associated to the permutation $\sigma=\{4,3,1,2\}$, and we start at $d_0$ where every voter declares intention to vote.

\begin{itemize}
\item Round 1: $d_1(4)$$=$$3$, $d_2(3)$$=$$2$, $d_3(1)$$=$$2$, $d_4(2)$$=$$2$. At the end of this round, $2$ is the unique guru.
\item Round 2: $d_5(4)$$=$$4$, $d_6(3)$$=$$2$, $d_7(1)$$=$$2$, $d_8(2)$$=$$4$. At the end of this round, $4$ is the unique guru.
\item Round 3: $d_9(4)$$=$$4$, $d_{10}(3)$$=$$3$, $d_{11}(1)$$=$$1$, $d_{12}(2)$$=$$3$. At the end of this round, $1$, $3$ and $4$ are gurus.
\item Round 4: $d_{13}(4)$$=$$3$, $d_{14}(3)$$=$$1$, $d_{15}(1)$$=$$1$, $d_{16}(2)$$=$$2$. At the end of this round, $1$ and $2$ are gurus.
\item Round 5: $d_{17}(4)$$=$$4$, $d_{18}(3)$$=$$2$, $d_{19}(1)$$=$$2$, $d_{20}(2)$$=$$4$. At the end of this round, the delegation function is exactly the same as the one at the end of round 2. %At the end of this round, $2$ and $4$ are gurus.
%\item Round 6: $d_{21}(4)$$=$$4$, $d_{22}(3)$$=$$2$, $d_{23}(1)$$=$$2$, $d_{24}(2)$$=$$4$. At the end of this round, the delegation function is exactly the same as the one at the end of round 2.
\end{itemize}
\vspace{-0.5cm}
\endproof

Hence the convergence of a BRD is not guaranteed even in instances where an equilibrium exists. In fact, the game is not even weakly acyclic, as shown in the following Theorem.
\begin{theorem}
 There exists an instance with a complete SN where an equilibrium exists, but there exists a starting delegation function from which no IRD converges to equilibrium.
\end{theorem}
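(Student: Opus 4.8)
The plan is to prove that the game we construct is not \emph{weakly acyclic}, i.e., to exhibit a starting delegation function $d_0$ from which \emph{every} improving-move sequence avoids all equilibria. The natural device is a \emph{trap}: a set $S$ of delegation functions with $d_0 \in S$ such that (i) no $d \in S$ is an equilibrium, and (ii) for every $d \in S$, every voter $i$, and every improving move of $i$, the resulting delegation function again lies in $S$. If such an $S$ exists then, since each $d \in S$ is not an equilibrium it always offers some improving move, and since every improving move stays in $S$, any IRD started at $d_0$ remains in $S$ forever and never reaches an equilibrium. This is strictly stronger than the previous theorem: there one only had to track a single BRD, whereas here closure of $S$ must hold under \emph{every} improving deviation, under every token function.

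To secure requirement (i) I would build the instance around the 3-cycle of Example~\ref{ex1}, whose three voters can never be simultaneously Nash-stable (the Observation). Making three core voters into such a cycle guarantees that any state in which they are left ``chasing'' one another violates Nash-stability for at least one of them, so $S$ automatically contains no equilibrium. The tension is the coexistence requirement: the full instance must \emph{also} admit an equilibrium (the theorem's hypothesis), so the core cycle has to be resolvable with the help of one or more auxiliary voters. First I would add auxiliary voter(s) and preferences so that an explicit equilibrium exists --- I would exhibit it and verify Nash-stability voter by voter, using the kernel characterization of Proposition~\ref{prop:equivGurusKernel} to check that its set of gurus is a kernel of $\GP$ --- while arranging that the guru which resolves the cycle in that equilibrium is \emph{not} available inside $S$.

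The main obstacle is exactly this last point. In a complete SN any voter can delegate to any current guru, so if the auxiliary ``resolving'' voter were a guru in every state of $S$, the core voters could always escape to it and the dynamics would drain to the equilibrium, destroying the trap. Hence the auxiliary voter must itself be fragile: it must be pulled into the spinning core in every trap state, so that it is never a stable guru while inside $S$, and only becomes the resolving guru in states \emph{outside} $S$. Concretely I would couple the auxiliary voter to the core cycle through its preferences so that the invariant defining $S$ --- ``the resolving voter is not a guru, and the three core voters sit in a live cyclic configuration'' --- is preserved by every improving move. Verifying this closure is the crux of the proof: for each voter I would enumerate her reachable gurus (easy in a complete SN, since $\AttBy(i,d)$ is just the set of current gurus of the other voters) and check that no strictly improving deviation can either turn the auxiliary voter into a guru or stabilize the core, so that the move keeps the state in $S$.

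Finally, with the trap established, I would conclude by picking a convenient $d_0 \in S$ (for instance the configuration in which every voter declares intention to vote, provided it lies in $S$). Every IRD --- and in particular every BRD, since a BRD is an IRD --- then stays within $S$ and therefore never converges to an equilibrium, even though one exists. This produces the claimed instance with a complete SN and proves the statement.
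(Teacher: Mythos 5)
Your overall strategy --- a trap set $S$ containing $d_0$, closed under \emph{every} improving move and containing no equilibrium, built from the 3-cycle of Example~\ref{ex1} plus an auxiliary ``resolving'' voter who enables an equilibrium but can never become a guru inside $S$ --- is exactly the right idea, and it is what the paper's construction implements. The gap is that your write-up stops where the proof actually starts: you never fix the preferences, never fix $d_0$, and never carry out the closure verification that you yourself call the crux. This is not a formality, because the specific locking mechanism you sketch (the resolving voter $a$ is ``pulled into the spinning core'') is the delicate one. In a complete SN, $a$'s guru changes \emph{passively} every time the core voter she delegates to moves; if the spin ever hands $a$ a guru she ranks below voting, then voting becomes an improving move for $a$, some IRD takes it, $a$ becomes a guru, and since every core voter must accept $a$ (this is forced by Proposition~\ref{prop:equivGurusKernel}: $\{a\}$ has to be absorbing in $\GP$ for the intended equilibrium to exist) the dynamics drains straight to the equilibrium, destroying the trap. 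Concretely, with core preferences $3: a \succ_3 4 \succ_3 3$, $\ 4: a \succ_4 5 \succ_4 4$, $\ 5: a \succ_5 3 \succ_5 5$, if $a$ accepts only voter $3$ (i.e., $a: 3 \succ_a a$), then from $d_0(a)=3$ with the others voting, voter $3$'s improving move to delegate to $4$ leaves $a$ with guru $4$, whereupon the IRD in which $a$ next votes converges to the equilibrium. To make your mechanism sound, $a$ must accept \emph{all} core voters above voting, e.g.\ $a: 3 \succ_a 4 \succ_a 5 \succ_a a$ with $d_0(a)=3$: then $a$'s guru always equals $3$'s guru, which provably stays in $\{3,4,5\}$, $a$ never has an improving move, the core voters can never reach an acceptable guru (delegating to $a$ either creates a cycle or merely routes to $3$'s guru), and $\{a\}$ is the unique kernel, so an equilibrium exists. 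None of this enumeration appears in your proposal. Note also that your suggested fallback $d_0$ (everyone votes) can never lie in a trap of this shape: there the resolving voter is herself a guru, every core voter's improving move is to delegate to her, and some IRD converges; the resolving voter must \emph{start} in a delegating position.

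For comparison, the paper sidesteps the delicate closure analysis by using a \emph{second} auxiliary voter: voters $1$ and $2$ accept only each other ($1: 2 \succ_1 1$ and $2: 1 \succ_2 2$), the three core voters rank $1$ first and reject $2$, and $d_0(1)=2$ with everyone else voting. Then voter $1$ holds her top choice forever and never moves; voter $2$'s unique acceptable guru $1$ is unreachable (delegating to $1$ creates a cycle, and nobody else ever has guru $1$); hence voter $1$ never votes, while the unique kernel of $\GP$ is $\{1\}$, so the unique equilibrium requires her to vote. Closure of the trap then reduces to a two-line induction instead of a case analysis over core configurations --- that simplification is what the two-auxiliary-voter device buys.
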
 
\begin{proof}
Consider an instance with 5 voters connected in a complete social network such that voters \(1\) and \(2\) approve each other as possible guru and then prefer to vote. Voters \(3\), \(4\) and \(5\) also accept voter \(1\) as guru, 
which is their preferred guru, but not voter \(2\). Then their preferences form a 3-cycle, i.e., voter \(3\) accepts voter \(4\) and then prefers to vote, voter \(4\) accepts voter \(5\) and then prefers to vote, voter \(5\) accepts voter \(3\) and then prefers to vote. The delegation function where voter \(1\) votes and all voters delegate to her is the only equilibrium of the instance. And yet if the starting delegation function is \(d_0(1) = 2\) and \(d_0(i) = i\) for all other \(i\), then there is no IRD that will converge as voter 1 will never change her delegation. 
\end{proof}

However, we note that given an equilibrium (when one exists), we can design a permutation dynamics that converges towards this equilibrium. 

\begin{theorem}
In a complete SN, for any preference profile, if an equilibrium exists, then we can find a permutation $\sigma$ inducing a permutation dynamics which starts when all voters vote, and which always converges under BRD to this  equilibrium.
\end{theorem}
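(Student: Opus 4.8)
The plan is to exploit Proposition~\ref{prop:equivGurusKernel}. Let $d^{*}$ be the assumed equilibrium and $K=\Gurus(d^{*})$; by that proposition $K$ is a kernel of $\GP$ containing no abstainer, and in a complete $\SN$ the guru of every voter is forced: for each voter $i$ write $g^{*}(i)=\GuruOf(i,d^{*})$, which is $i$'s most preferred element of $K$ (together with the option $0$ for abstainers), since in $d^{*}$ every element of $K$ is a neighbour of $i$ that votes, so $\AttBy(i,d^{*})=K$ (plus possibly $0$). I therefore aim to drive the dynamics to the configuration whose guru function is exactly $g^{*}$; any such configuration is Nash-stable because stability in a complete graph depends only on the guru assignment, which here coincides with that of $d^{*}$.

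For the permutation I would take $\sigma$ to list all non-gurus first, in any order, and then all gurus, in any order, and repeat it. The first key step is to show that after the first sweep the set of voters who vote is exactly $K$. During the non-guru block the gurus are never given the token, so each $g\in K$ keeps voting; hence when a non-guru $i$ is processed, every element of $K$ is a voting neighbour reachable by a direct delegation, so her best attainable guru is at least $g^{*}(i)\succ_{i}i$ (and abstainers never vote). Thus every non-guru strictly prefers delegating to voting and leaves her turn non-voting, so at the end of the non-guru block the voters who vote are precisely $K$; and because the improved-move set forbids closing a delegation cycle (and delegating to an abstainer), every non-guru's chain terminates at an element of $K$. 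The guru block then only confirms this: a guru $g$ sees every non-guru delegating into $K$, so the only gurus she can reach are other members of $K$ (which she rejects, by independence of $K$) or, through a follower, again a member of $K$ she does not prefer to herself (or a follower delegating back to $g$, excluded as a cycle); hence her best response is to keep voting.

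The second key step, a lemma I would isolate, is that once the voting set equals $K$ it stays equal to $K$ under \textbf{BRD}, and in that regime every non-guru's \emph{outcome} best response is uniquely $g^{*}(i)$. Indeed, with exactly $K$ voting and the graph complete, each voter can reach every element of $K$ by a direct delegation and can reach no other guru (no one outside $K$ votes), so her best attainable guru is her favourite member of $K$, namely $g^{*}(i)$; a guru again prefers to remain voting by independence of $K$. Taking the tie-break that realises this outcome by the direct delegation $d(i)=g^{*}(i)$, each non-guru, the first time she is processed in the second sweep, delegates straight to the stable guru $g^{*}(i)$ and is frozen, so the second sweep already produces the target equilibrium and every later sweep leaves it unchanged.

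The hard part is this last point under an \emph{arbitrary} resolution of the best responses. A voter may realise her correct outcome $g^{*}(i)$ through a transient chain passing through another follower rather than through the direct edge, and when that follower later re-optimises to her own target the chain can momentarily hand $i$ a wrong guru, forcing a correction in a subsequent sweep. The obstacle is to prove that this self-correction terminates: I would argue that a voter who ever delegates directly to her target guru is permanently locked, that the set of locked voters cannot shrink, and that in each sweep at least one further voter is forced to lock (the no-cycle rule, together with the fact that a re-optimising follower breaks exactly the chains that supported a wrong guru, prevents the perpetual oscillation exhibited in the negative convergence results). Bounding the number of sweeps and ruling out cycling is the delicate quantitative heart of the argument; the structural facts of the first two steps—driven entirely by the kernel's independence and absorption together with the prohibition of delegation cycles—are the clean, load-bearing ingredients.
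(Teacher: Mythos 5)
Your proposal takes the same route as the paper's proof, and through your third paragraph it essentially \emph{is} the paper's proof: the same permutation (non-gurus in any order, then the members of $K=\Gurus(d^*)$), the same first-sweep argument (absorption of the kernel $K$ in $\GP$ prevents any non-guru from voting while the gurus have not yet moved; independence of $K$ and the fact that $K$ contains no abstainer then keep every guru voting), and the same conclusion for the second sweep. Indeed, the paper's proof stops exactly where your third paragraph stops: it asserts that on their second turn the non-gurus ``choose to delegate to their favorite guru in $\Gurus(d)\cup\{0\}$'', i.e., it silently adopts precisely the tie-break you make explicit (realize the best outcome by a \emph{direct} delegation), under which the process is stationary from then on. Two of your side remarks are in fact more careful than the paper: convergence is really to an equilibrium with the same guru assignment as $d^*$ (not necessarily to $d^*$ as a delegation function), and your observation that Nash-stability in a complete SN depends only on the guru assignment is the right way to justify calling this ``convergence to this equilibrium''. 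One small slip: at the end of the first sweep a non-guru's chain may also terminate in an abstention (a follower who is an abstainer may have chosen $0$ after being delegated to), not only in $K$; this is harmless, since a guru is a non-abstainer and prefers voting to $0$ anyway.

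The place where you go beyond the paper --- your fourth paragraph, on arbitrary resolution of best-response ties --- is a genuine gap in your write-up: it is a plan, not a proof, and its central claim (each sweep forces at least one voter to ``lock'' by delegating directly to her target guru) is unsubstantiated; convergence can also occur through voters whose final delegation remains indirect, so the right monotone quantity would have to be something like the set of voters whose entire delegation chain consists of voters with correct gurus, which never shrinks. But you should know the paper does not close this gap either: if ``always converges under BRD'' is read as quantifying over every best-response dynamics (every way of choosing within the best-response set $B_d(i)$), then the published proof is incomplete at exactly the point you identify, since an adversarial choice of indirect best responses in the second sweep can hand upstream voters wrong gurus and force further rounds of correction. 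So, judged against the paper's own argument, you have reproduced it (paragraphs 1--3) and, in addition, correctly flagged --- without resolving --- a robustness issue that the paper assumes away.
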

\proof
Let $d$ be a Nash-stable delegation function and consider any permutation $\sigma$ such that voters in $\Gurus(d)$ are placed in the $|\Gurus(d)|$ last positions. Then, under BRD, after the $|\VoterSet \setminus\Gurus(d)|$ first delegations steps of the delegation process, no voter in $\VoterSet \setminus \Gurus(d)$ has chosen to vote. Indeed, a choice to vote by one of these voters would contradict the fact that $\Gurus(d)$ is absorbing in the delegation-acceptability digraph $\GP$. Hence, they all have decided to abstain or to delegate and are no more available as possible guru. Furthermore, as $\Gurus(d)$ is independent in $\GP$ and contains no abstainers, in the $|\Gurus(d)|$ next steps, all voters in $\Gurus(d)$ decide to vote. The second time each voters in $\VoterSet\setminus\Gurus(d)$ have the token, they choose to delegate to their favorite guru in $\Gurus(d)\cup\{0\}$ (if it is not already the case with their current delegation), and each voter $\Gurus(d)$ remains a guru. At this point, the delegation process has converged to $d$.
\endproof

\subsection{Convergence of delegation dynamics in tree social networks} \label{sec:convPathAndStar}
We now consider the convergence of delegation dynamics in instances with SNs that belong to two subclasses of tree SNs, namely path and star SNs. For these instances, an equilibrium is guaranteed to exist by Theorem \ref{thrm:treeExistence}. Unfortunately, we show that a BRD may not converge if $\SN$ is a path. Differently, if $\SN$ is a star, then BRD are guaranteed to converge. However, IRD are not. These results are presented in the two following theorems. 
\begin{theorem}
There exists a path SN and a BRD that does not converge even if there are no abstainers and if in the starting delegation function, every voter declares intention to vote. 
\end{theorem}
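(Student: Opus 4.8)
The plan is to construct an explicit small path SN with preferences (no abstainers) and a best-response permutation dynamics, starting from the all-vote configuration, whose trajectory is eventually periodic and hence never converges. This mirrors the structure of the complete-graph non-convergence result (Theorem~\ref{thrm:dynamics}): exhibit a concrete instance, fix a permutation $\sigma$ inducing the token function $T$, and trace the delegation functions $d_0, d_1, d_2, \dots$ round by round until a configuration recurs. The key difference from the complete case is that on a path, voter $i$ can only delegate to her two neighbors $i-1$ and $i+1$, so the set $\AttBy(i,d)$ is restricted to the gurus reachable through those two directions; this constraint must be respected when computing each best response $B_{d_{t-1}}(T(t))$.

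First I would lay out a path on a handful of vertices (probably $4$ or $5$ suffices, by analogy with the $4$-voter complete instance) labeled $1,2,\dots,k$ with edges $(i,i+1)$. Then I would specify each $\Acc(i)$ together with the internal ranking on $\Acc(i)$, choosing preferences so that the ``ideological line'' phenomenon from the introduction is triggered: each voter would rather follow a neighbor in one direction, but following the chain transitively lands her on an undesirable guru, prompting her to switch, which in turn destabilizes a neighbor. The design goal is a rotating pattern in which the single guru (or small guru set) migrates along the path without ever settling. I would then pick the permutation $\sigma$ (the order in which voters receive the token) so as to lock in this migration.

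The core of the proof is the explicit round-by-round computation: starting from $d_0$ with $d_0(i)=i$ for all $i$, apply best responses in the order dictated by $\sigma$, compute $\GuruOf(i,d_t)$ after each move (being careful that on a path a delegation chain is just a contiguous segment, and that a move creating a cycle is disallowed so the voter abstains instead), and record the sequence of configurations. I would present this as an itemized list of rounds exactly as in Theorem~\ref{thrm:dynamics}, ending by observing that the configuration at the end of some round equals the configuration at the end of an earlier round, so the dynamics cycles forever. I must also verify up front that an equilibrium does exist for this instance (Theorem~\ref{thrm:treeExistence} guarantees this automatically since the SN is a path, hence a tree), so the non-convergence is genuinely about the dynamics and not about absence of equilibria.

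The main obstacle is the combinatorial design of the preferences: because the path severely limits each voter's available deviations, it is harder than in the complete graph to keep every voter simultaneously "just barely unhappy" so that the best-response updates keep chasing each other around the cycle. I expect to spend most of the effort tuning the internal orders on the $\Acc(i)$ and the position of the token-permutation $\sigma$ so that (a) each move in the intended cycle really is a best response given the current neighbors' gurus, and (b) no unintended equilibrium is reached mid-cycle. Once a valid instance and permutation are found, the remainder is a routine verification: for each token-holding voter at each step, confirm that the prescribed delegation indeed maximizes her outcome over $I_{d_{t-1}}(T(t)) \cup \{d_{t-1}(T(t))\}$, and confirm the periodicity of the resulting sequence $(d_t)$.
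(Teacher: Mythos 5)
You have correctly identified the proof technique, and it is exactly the paper's: exhibit a small path instance with no abstainers, start from the delegation function where every voter votes, and trace a best response dynamics until some configuration recurs. But the proposal stops at the strategy level, and for a theorem of this kind the strategy is not the proof: the entire mathematical content is the witness, i.e., the concrete preference profile, the concrete token function, and the step-by-step verification that each prescribed move is a best response and that the trajectory is periodic. You explicitly defer all of this (``I expect to spend most of the effort tuning the internal orders on the $\Acc(i)$\dots''), so as written nothing is established. For comparison, the paper's witness is the path $1$--$2$--$3$--$4$--$5$ with preferences $5 \succ_1 1$; $\ 5 \succ_2 1 \succ_2 2$; $\ 4 \succ_3 2 \succ_3 3$; $\ 1 \succ_4 5 \succ_4 4$; $\ 1 \succ_5 5$. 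The design idea your plan is missing is the symmetry between the two endpoints: each extreme voter wants the \emph{opposite} extreme as guru, so the dynamics repeatedly builds the full delegation chain toward voter $5$, tears it down (the middle voter $3$ accepts neither extreme, so once she is captured by an extreme guru she reverts to voting, which unravels the chain), rebuilds the mirror-image chain toward voter $1$, tears it down again, and returns to a previously visited configuration.

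There is also a concrete step in your plan that would likely fail: your commitment to a \emph{permutation} dynamics. In the complete-graph non-convergence result (Theorem~\ref{thrm:dynamics}) a permutation does suffice, but the paper's path counterexample is \emph{not} a permutation dynamics: its recurring token block is $(3,2,1,3,2,4,5,3,4,5,3,4,2,1)$, in which voter $3$ moves four times per cycle, voters $2$ and $4$ three times, and voters $1$ and $5$ only twice. This asymmetry is not incidental. Shuttling the unique guru from one end of the path to the other forces the middle voters to revise their delegations more often than the endpoints, which a permutation schedule forbids. The theorem does not require a permutation --- \textbf{BR-CONV} quantifies over all token functions, so exhibiting any single non-converging token function suffices --- and by imposing one you would be attempting to prove a strictly stronger statement whose truth the paper leaves open. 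Drop that restriction, allow an irregular token schedule, and then carry out the round-by-round verification you outlined; with the preference profile above, that completes the proof.
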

\begin{proof}
Consider a path of 5 voters, with the following preferences:
\begin{align*}
    1&: 5 \succ_1 1\\
    2&: 5 \succ_2 1 \succ_2 2\\
    3&: 4 \succ_3 2 \succ_3 3\\
    4&: 1 \succ_4 5 \succ_4 4\\
    5&: 1 \succ_5 5
\end{align*}
Now consider the token function $T$ and the BRD such that: $d_1(3) = 4$, $d_2(4)=5$, $d_3(5) = 5$, $d_4(2) = 3$ so her guru is $5$, $d_5(1) = 2$ so her guru is $5$. At this point every voter is delegating directly or indirectly to voter 5. Now the dynamics continues in the following way: $d_6(3)=3$, $d_7(2)=2$, $d_8(1)=1$, $d_9(3)=2$, $d_{10}(2)=1$, $d_{11}(4)=3$ so her guru is $1$, $d_{12}(5)=4$ so her guru is $1$. At this point every voter is delegating directly or indirectly to voter 1 which is the symmetric situation to the one where 5 was the guru of every voter. Now the dynamics continues in the following way: $d_{13}(3)=3$, $d_{14}(4)=4$, $d_{15}(5)=5$, $d_{16}(3)=4$, $d_{17}(4)=5$, $d_{18}(2)=3$ so her guru is $5$, $d_{19}(1)=2$ so her guru is $5$.
Now we came back to the exact same situation where 5 is the guru of all voters, which proves that by continuing in this way the BRD will loop infinitely.
\end{proof}

\begin{theorem}
If $\SN$ is a star and there are no abstainers, an IRD will always converge. If $\SN$ is a star, a BRD will always converge but there exists an IRD that do not converge. 
\end{theorem}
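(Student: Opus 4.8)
The plan is to prove the three assertions around a single structural picture of a star with center $c$ and leaves $\ell_{1},\dots,\ell_{k}$. A leaf $\ell$ has $c$ as its only neighbour, so the gurus she can reach are exactly $\{\ell,\GuruOf(c,d),0\}$; the center can only vote, abstain, or delegate to a leaf, and since delegating to a non-voting leaf either closes a $2$-cycle or ends at an abstainer, the gurus $c$ can reach are exactly $\{c,0\}\cup\{\ell:d(\ell)=\ell\}$. The engine of both positive results is a locking observation: if $c$ delegates to a voting leaf $\ell^{*}$, then $\ell^{*}$'s only alternative to voting is to delegate to $c$, which closes the $2$-cycle between them and yields guru $0$; hence a non-abstainer $\ell^{*}$ strictly prefers to keep voting, and $\GuruOf(c,d)=\ell^{*}$ stays frozen until $c$ herself moves.

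For the BRD claim I would first record that under best response an abstainer never votes: if $0\succ_{i}i$ then voting is dominated by abstaining and never lies in $B_{d}(i)$. Thus there is a finite time after which (each abstainer having held the token once) no abstainer votes, so every leaf that $c$ delegates to is a non-abstainer and is locked. I then follow $\GuruOf(c,d)$ only along the steps where $c$ moves. Its previous value is always still available to $c$ — a locked voting leaf is still voting, $c$ may always vote, and $0$ is always an option — so best response cannot lower it and, thanks to locking, it does not drop between two moves of $c$ either. Being non-decreasing in the finite order $\succ_{c}$, it stabilises; once $\GuruOf(c,d)$ is fixed each leaf faces a fixed target and settles in one further move. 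The crux, and the only place where something could go wrong, is ruling out a decrease of $\GuruOf(c,d)$ between consecutive moves of $c$; this is exactly what locking together with ``abstainers never vote'' prevents.

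For the first sentence (no abstainers, IRD) I run the same skeleton. Now $\AbstSet=\emptyset$, so locking holds verbatim and, moreover, every move being improving, $\GuruOf(c,d)$ is fixed between moves of $c$ and strictly increases at each move of $c$; hence $c$ moves only finitely often. Between consecutive moves of $c$ the target is fixed, so each leaf's guru strictly increases at each of her own moves and she moves finitely often; summing over the finitely many inter-move phases shows the dynamics terminates. Excursions where a voter escapes a guru she ranks below $0$ by abstaining are harmless: they are still strictly improving and never create a frozen guru for $c$.

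The negative part is where IRD and BRD genuinely part ways, and I expect the construction to be the delicate step. The mechanism I would exploit is that, under IRD, an abstainer leaf whose current guru she ranks below herself has voting as a (non-optimal but) improving move, so an adversarial IRD may let her vote rather than take the best response of abstaining. Concretely, take a star with center $c$ and two leaves $a$ (an abstainer) and $b$ (a stubborn voter), with
\begin{align*}
c &: a \succ_c b \succ_c c \succ_c 0,\\
a &: c \succ_a 0 \succ_a a \succ_a b,\\
b &: b \succ_b a \succ_b c \succ_b 0,
\end{align*}
and start from $d(c)=c,\ d(a)=c,\ d(b)=b$. I would then run the loop: $c$ delegates to $b$ (since $b\succ_c c$); $a$, now represented by $b$ with $a\succ_a b$, \emph{votes} (abstaining being her best response); $c$ switches to $a$ (since $a\succ_c b$); $a$ \emph{abstains} (since $0\succ_a a$), which drops $\GuruOf(c,d)$ to $0$; $c$ \emph{votes} (delegating to $b$ being her best response); and finally $a$, again represented by $c$ with $c\succ_a 0$, delegates back to $c$, returning to the start. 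Throughout, $b$ prefers voting to every attainable guru and never moves, so a token function cycling through $c,a$ (and visiting $b$ harmlessly) produces a non-terminating IRD. Checking that each listed step is genuinely improving and that $b$ is always stable is routine from the displayed preferences.
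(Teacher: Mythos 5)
Your proof is correct and follows essentially the same route as the paper's: both positive claims rest on the monotone improvement of the center's guru across her own moves (your ``locking'' observation is exactly the fact the paper leaves implicit behind ``it is easy to realize''), and the negative claim is shown by a three-voter star in which IRD's tolerance of non-best improving moves---an abstainer voting, the center voting instead of delegating---drives a cycle. Your concrete counterexample (one abstainer leaf $a$ plus one stubborn voter $b$, giving a 6-step cycle) differs in detail from the paper's (two symmetric abstainer leaves), but it checks out: every listed step is strictly improving, and $b$ never acquires an improving move, so the comparison is one of cosmetics rather than substance.
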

\begin{proof}
Consider the case where there are no abstainers. In this case, it is easy to realize that, in an IRD, each time $c$ changes her delegation, she increases her satisfaction. Hence, $c$ will change her delegation at most $n$ times. Consider the last time $c$ changes her delegation. Now each leaf can vote, delegate to $c$ or abstain, and these choice will be the same each time she gets the token. Hence, after each leaf get the token two times more, we reach an equilibrium. 

Now consider the case with possibly abstainers and a BRD. After each voter has received the token, no abstainer (rep. non-abstainer) is declaring intention to vote (resp. abstain) and they will no longer be as they prefer abstaining (resp. voting) to voting (resp. abstaining). Now, each time $c$ changes her delegation, she increases her satisfaction. By similar arguments as in the case with no abstainers, we will reach an equilibrium. 
Lastly, consider a star SN with three voters $l$, $c$ and $r$  such that $\EdgeSet = \{(l,c),(c,r)\}$. Voter $l$ (resp. $r$) first prefers to have $c$ as guru, then to abstain, then to vote, then to have $r$ (resp. $l$) as guru. Voter $c$ prefers to have $r$ as guru, then $l$, then to vote. Now consider the starting delegation function in which all voters declare intention to vote and the following token function and IRD: 
$d_1(l) = c$, $d_2(c) = r$, $d_3(l) = l$, $d_4(r)=0$, $d_5(c)=c$, $d_6(r) = c$, $d_7(c)=l$, $d_8(r)=r$, $d_9(l)=0$, $d_{10}(c)=c$, $d_{11}(l)=c$. We have came back to $d_1$ which proves the claim. %shows that by continuing in this way,  this dynamics will loop infinitely.
\end{proof}

\section{Conclusion and Future Works}
\label{sec:conclusion}
We have proposed a game-theoretic model of the delegation process induced by the liquid democracy paradigm when implemented in a social network. This model makes it possible to investigate several questions on the Nash-equilibria that may be reached by the delegation process of liquid democracy. We have defined and studied several existence and optimization problems defined on these equilibria. Unfortunately, the existence of a Nash-equilibrium is not guaranteed and is even NP-Hard to decide even when the social network is complete or has a low maximum degree. In fact, a Nash-equilibrium is only guaranteed to exist whatever the preferences of the voters if the social network is a tree. Hence, we have investigated the case of tree social networks and designed efficient optimization procedures for this special case.
We give possible directions for future works. Firstly, similarly to Bloembergen et al. \cite{bloembergen2018rational} who studied the price of anarchy of delegation games with a more specific type of preferences (in \cite{bloembergen2018rational} the preferences are structured by some parameters measuring the expertise and the similarity of the voters), it would be interesting to study the price of anarchy of the delegation games developed in this paper.  
A second direction would be to investigate a similar game-theoretic analysis of delegations in other frameworks related to liquid democracy such as \emph{viscous democracy} or \emph{flexible representative democracy} \cite{boldi2011viscous,abramowitz2018flexible}. 

%\begin{acks}
%This is where we put acknowledgements.
%\end{acks}
% Bibliography
\bibliographystyle{alpha}
\bibliography{ourbib}

\end{document}